\renewcommand*\nompreamble{\begin{multicols}{2}}
\renewcommand*\nompostamble{\end{multicols}}
\appto\appendices{\counterwithin{equation}{section}}
\newtheorem{prop}{Proposition}[subsection]
    \LetLtxMacro\oldref{\ref}%
    \DeclareRobustCommand{\ref}[2][]{{\color{blue}(\oldref#1{#2})}}%
     \LetLtxMacro\oldcite{\cite}%
    \DeclareRobustCommand{\cite}[2][]{{\color{blue}\oldcite#1{#2}}}%
    \LetLtxMacro\oldcitep{\citep}%
    \DeclareRobustCommand{\citep}[2][]{{\color{blue}\oldcitep#1{#2}}}%
\newcommand\barbelow[1]{\stackunder[1.2pt]{$#1$}{\rule{1.0ex}{.075ex}}}
\newcommand{\bs}{\boldsymbol}
\newcommand{\tb}{\textbf}
\newcommand{\dbar}{{\mkern3mu\mathchar'26\mkern-12mu d}}
\DeclareRobustCommand{\rchi}{{\mathpalette\irchi\relax}}
\newcommand{\irchi}[2]{\raisebox{\depth}{$#1\chi$}} % inner command, used by \rchi
\newcommand\groupequation[2][30pt]{%
  \setbox0=\hbox{$\displaystyle#2$}%
  \stackengine{0pt}{\copy0}{%
    \makebox[\linewidth]{\hfill$\left.\rule{0pt}{\ht0}\right\}$\kern#1}}
    {O}{c}{F}{T}{L}
}
\begin{document}

\title{A space-time gauge theory for modeling ductile damage and its NOSB peridynamic implementation}
\author[1]{Sanjeev Kumar  \thanks{sanjeevkumar@iisc.ac.in}}
\affil[1]{Department of Materials Engineering, Indian Institute of Science, Bangalore 560012, India }

\date{}
\maketitle

%------------------------------------------------------------------------------------
\section*{Abstract}

Local translational and scaling symmetries in space-time is exploited for modelling ductile damage in metals and alloys over wide ranges of strain rate and temperature. The invariant energy density corresponding to the ductile deformation is constructed through the gauge invariant curvature tensor by imposing the Weyl like condition. In contrast, the energetics of the plastic deformation is brought in through the gauge compensating field emerged due to local translation and attempted to explore the geometric interpretation of certain internal variables often used in classical viscoplasticity models. Invariance of the energy density under the local action of translation and scaling is preserved through minimally replaced space-time gauge covariant operators. Minimal replacement introduces two non-trivial gauge compensating fields pertaining to local translation and scaling. These are used to describe ductile damage, including plastic flow and micro-crack evolution in the material. A space-time pseudo-Riemannian metric is used to lay out the kinematics in a finite-deformation setting. Recognizing the available insights in classical theories of viscoplasticity, we also establish a correspondence of the gauge compensating field due to spatial translation with Kr\"{o}ner's multiplicative decomposition of the deformation gradient. Thermodynamically consistent coupling between viscoplasticity and ductile damage is ensured through an appropriate degradation function.  Non-ordinary state-based (NOSB) peridynamics (PD) discretization of the model is used for numerical implementation. We conduct simulations of uniaxial deformation to validate the model against available experimental evidence and to assess its predictive features. The model's viability is tested in reproducing a few experimentally known facts, viz., strain rate locking in the stress-strain response, whose origin is traced to a nonlinear microscopic inertia term arising out of the space-time translation symmetry. Finally, we solved 2D and axisymmetric deformation problems for qualitatively validating the model's viability. NOSB peridynamics axisymmetric formulation in finite deformation setup is also presented.

\tb{Key Words:}  Viscoplasticity; Ductile-damage; Gauge theory; Large deformation; Axisymmetry; Peridynamics
%=======================================================

%========================Viscoplastic_Ductile_Damage===================

\section{Introduction}

As with viscoplasticity, mechanicians have, over the years, sustained an animated research effort at a better understanding of ductile damage, especially from micromechanical or phenomenological perspectives -- via both continuum modelling and experiments conducted at different scales. Large plastic deformation that one typically encounters in several industrial applications -- machining, metal forming (drawing, rolling, extrusion) or fracking under large confining pressure to wit, is often attended to by damage. By the word `damage', one often alludes to an inelastic phenomenon that involves a measure of reduction in the effective cross sectional area around a material point so that the load carrying capacity is reduced. Such a reduction in area could come about through the formation of irreversible microcracks or nucleation and coalescence of voids in the mesoscale. 

%Unlike the case of brittle damage, ductile damage always involves plasticity co-evolving with damage. 

The twin processes of damage and plastic deformation result in macroscopic signatures, viz.  stiffness degradation, reduction in effective cross-section that offers resistance against external loads and ductility - perhaps up to a critical state, beyond which rupture may occur. Substantial effort has been invested in unraveling and mathematically describing the physical phenomena underlying the highly nonlinear dynamical processes resulting in ductile damage at the continuum scale. Acceptable predictions of such processes require, among others, a rational coupling of viscoplastic deformation with damage. It also requires an accurate relationship among flow stress, temperature, strain rate and accumulated plastic strain or other internal variables presenting the micro-morphology of polycrystalline solids during plastic deformation and damage. Over the years, many constitutive schemes have been drawn up through phenomenological or physical motivations. For viscoplastic deformation with damage, the phenomenological constitutive model proposed by \cite{johnson1983constitutive} deserves a special mention as it is simple to implement and the material parameters are already established for many metals and alloys of industrial significance. There are however limitations with this approach. For instance, the assumed uncoupling of strain, strain rate, and temperature in the Johnson-Cook (JC) model implies that it is hardly applicable to thermo-viscoplastic materials such as high-strength low-alloy steel (HSLA), Molybdenum etc., where rate sensitivity of flow stress changes with temperature. A few prominent examples of physics-based viscoplastic constitutive models that use the evolution of dislocation density, are by \cite{zerilli1987dislocation, follansbee1988constitutive, estrin1996dislocation, voyiadjis2005microstructural, krasnikov2011dislocation}. Depending on the rate-controlling mechanism specific to the material structure type, Zerilli and Armstrong (ZA) proposed constitutive expressions for different crystalline structures \citep{zerilli1987dislocation}. A modified ZA model in the high-temperature range is proposed by \cite{voyiadjis2005microstructural}, which also include the strain rate effect on the thermal activation area to study plastic deformation in metals under high temperature. \cite{langer2010thermodynamic} has proposed a nonequilibrium  thermodynamic framework based on two temperatures (including an effective temperature) for modeling dislocation mediated plastic flow for face-centered cubic metals. Some other prominent viscoplasticity model could be by \cite{article, khan1999behaviors, gao2012constitutive, knezevic2013polycrystal, kabirian2014negative, li2019machine}. For work on gradient-based viscoplasticity theories, we refer to \cite{muhlhaus1991variational, zbib1992gradient, hutchinson1997strain, gurtin2000plasticity, gurtin2005theory, gurtin2010mechanics}.

Local theories of plasticity coupled with damage often fall short in replicating experimental observations, viz., localization of plastic and damage zones induced due to the excessive softening.  Gradient-based plasticity and damage models overcome this problem by limiting the width of localized plastic and damage bands due to length scales associated with the higher order gradient terms.  
For modeling viscoplastic response in the shock wave regime, viz. under extreme strain rates, a few physically motivated models are reported by \cite{preston2003model, austin2011dislocation, crowhurst2011invariance, ravelo2013shock, mayer2013modeling}. For a few early and popular ductile damage models,  we refer to \cite{gurson1977continuum, tvergaard1981influence, tvergaard1984analysis, lemaitre1985continuous, tvergaard1989material}. A coupled theory of continuum damage mechanics and finite strain plasticity is formulated by \cite{voyiadjis1992plasticity} in the Eulerian reference system. Various aspects of an anisotropic damage model coupled to plasticity is discussed by \cite{al2003coupling}. \cite{brunig2008ductile} investigated the ductile damage criterion at various stress triaxialities and discussed the effect of stress triaxiality on the onset and evolution of damage in ductile metals. \cite{brunig2011modeling} presented the continuum model to investigate ductile damage and fracture behavior based on different micromechanisms. \cite{shojaei2013viscoplastic} presented viscoplastic constitutive theory for brittle to ductile damage in polycrystalline materials under dynamic loading.   A non-associative finite strain anisotropic elastoplastic model fully coupled with anisotropic ductile damage is proposed by \cite{badreddine2015damage}.  This is an area that has, of late, drawn great interest resulting in a significant body of literature on improved models for ductile damage. For few more recent research articles, see \cite{khan2016deformation, gholipour2019experimental, reddi2019ductile, chen2019investigation, sancho2019experimental, ganjiani2020damage, sabik2022tensile, beygi2022utilizing, zhang2022effects, wei2022analysis, de2023multi}. The literature on the ductile fracture is vast. The extensive literature on ductile fracture can be explored further by referring to \cite{besson2010continuum} and \cite{volegov2016damage} for comprehensive reviews on continuum models and experimental studies in this field. Phase-field models, incorporating a regularized Griffith-type \citep{griffith1921vi} brittle fracture approach, represent cracks by means of additional continuous field variables. Diffusive crack representation circumvents certain pathologies of a sharp interface description of cracks. Phase-field based brittle fracture models have been investigated, among others, by \cite{hakim2009laws, kuhn2010continuum, bourdin2011time, borden2012phase, hofacker2012continuum, kumar2018emergence}.  Phase-field based models of ductile fracture, which involve a coupling of damage mechanics with elasto-plasticity, have been presented by \cite{alessi2015gradient,miehe2016phase}. Such a model of ductile fracture  by \cite{ambati2015phase} multiplicatively couples damage with local plasticity through a degradation function.

A rationally grounded continuum model that can reproduce the essential features of ductile damage at the macroscale in good agreement with experimental evidence remains conspicuous by its absence. \cite{kadic1982yang} tried to use the Yang-Mills gauge theory to model plasticity with dislocation and disclination fields characterized through local rotational symmetry SO(3) and translational symmetry T(3).  The study showed that local symmetry breaking of T(3) action can model the evolution of dislocations. In contrast, disclinations and induced (rotational) dislocations are due to the local symmetry breaking of SO(3) group action. \cite{lagoudas1989material, edelen2012gauge}  proposed material and spatial gauge theories of solids to explain the dynamic behaviour of continuously distributed defects by exploring the local action of the material symmetry group $G_m = \left\lbrace SO(3) \triangleright T(3)\right\rbrace \times T(1)$ and the spatial symmetry group $G_s = \left\lbrace SO(3) \triangleright T(3)\right\rbrace$ on Lagrange density. A conformal gauge theory is presented by \cite{kumar2024modelling} for modelling coupled electro-mechanical phenomena in elastic dielectrics using local scaling symmetry. A space–time gauge theory for dynamic plasticity is proposed by \cite{kumar2024space} to investigate the role of non-linear micro-inertia on the stress-strain response of various metals and alloys. In the present study, we report an attempt to construct a rationally grounded continuum framework for modelling ductile damage in metals and alloys using the space-time gauge theory proposed by \cite{roy2019cauchy,roy2020cauchy, kumar2024space}. We exploit the local translational symmetry and the conformal symmetry in space-time to incorporate the history effect vis-\`a-vis ductile damage in our formulation.  Due to the local (inhomogeneous) translation in space and time, the energy density in its classical form will no longer be invariant. To restore invariance, we define the space-time gauge covariant derivatives in place of ordinary partial derivatives through minimal replacement. This procedure introduces gauge compensating one form fields $\bs{\bar{\Phi}}$ and $\bs{\Psi}$ in space and time. Using space-time gauge covariant derivatives and the Minkowski metric, we define the distance measure between two infinitesimally close points in both reference and current configurations. We also derive the 4D pulled-back metric. We may compute the distance between two infinitesimally separated points in terms of the current configuration using the coordinates of the reference configuration and the 4D pulled-back metric. We construct the invariant energy density under the action of local translation in space-time using the components of the 4D pulled-back metric. Next, we derive the relationship between the gauge compensating fields $\bs{\hat{\Phi}}$ and $\bs{F}^{p-1}$ and establish a correspondence with the popular multiplicative decomposition of the deformation gradient, $\bs{F} = \bs{F}^e \bs{F}^p$. Instead of postulating that the covariant derivative of the metric tensor is zero, we imposed a more general Weyl-like condition on the reference metric and construct a space-time gauge-invariant curvature tensor by exploiting the local conformal symmetry. The scalar gauge-invariant curvature is defined by appropriately contracting the space-time gauge-invariant curvature with the metric tensor. The scalar curvature is then employed to construct the invariant energy density associated with ductile damage. Next, the constitutive relations that provide a closure of the governing equations of motion and the thermodynamic restrictions are presented. We also derive the temperature evolution equation employing the laws of thermodynamics. For numerical implementation in the peridynamics framework, the non-ordinary state-based PD formulation of our space-time gauge theory is also presented briefly.

We organize the rest of the article as follows. In Section \ref{vd:stvd}, we present the space-time gauge theory and discuss the notion of gauge invariance under local translation and conformal transformation. We present the principle of virtual power and force-balance in Section \ref{vd:forceB}. Subsequently, in Section \ref{vd:constitutiveR}, the constitutive relations are derived. Section \ref{vd:PD} deals with non-ordinary state based PD formulation of our model. Illustrative numerical examples are provided in Section \ref{vd:numericalR} before concluding the article in Section \ref{vd:Conclusion}. Axisymmetric formulation for non-ordinary state-based peridynamics is presented briefly in Appendix \ref{axi_formulation}.

%=================================================================================================
\section{Space-time gauge theory}
\label{vd:stvd}

This section presents the notion of gauge invariance under local translational and conformal (scaling) symmetries. For completeness, we briefly discuss continuum elasticity kinematics in the space-time setup in line with \cite{roy2020cauchy}, which includes a description of the reference and the current configurations as 4-dimensional differentiable manifolds, distance measures in both reference and current configurations, 4D deformation map, etc.  Finally, we non-trivially extend the space-time continuum elasticity formulation to the ductile damage problem exploiting local translational and conformal symmetries.

%-------------------------------------------------------------------------%
 \begin{figure}
 \centering
 \includegraphics[width=0.55\linewidth]{./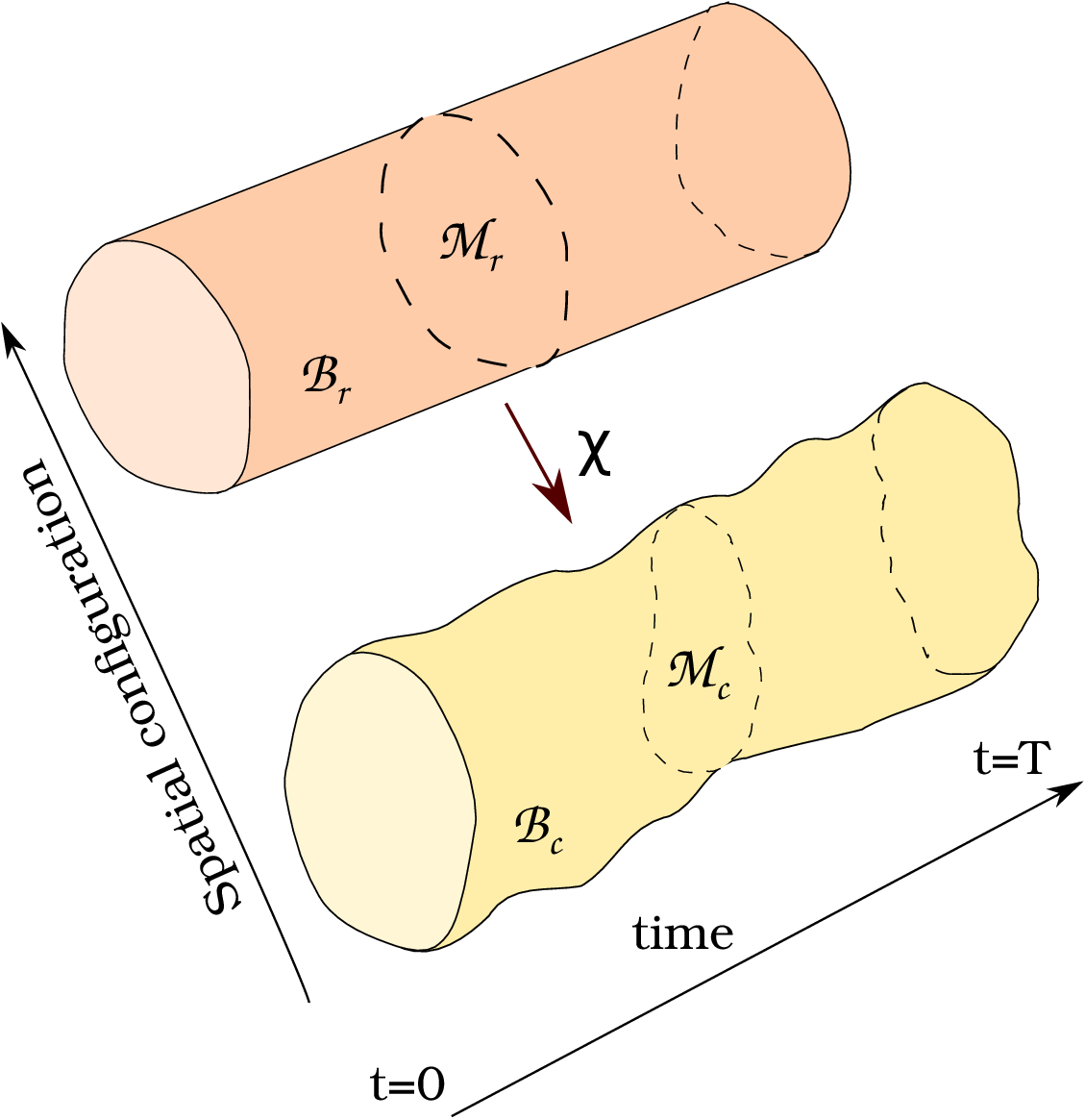}
 \caption{Schematic of reference and current configuration histories. Time coordinate is along the axis of the cylinder and cross-section at any time denotes the snapshot of spatial configuration.}
 \label{fig:configuration histories}
 \end{figure}

%------------------------------Kinematics---------------------------------%
\subsection{Kinematics}

We consider the reference and the current configurations as 4-dimensional differentiable manifolds. The reference configuration history is a collection of self-similar copies of the reference configuration at every time $t$ where, $0 \leqslant t \in \mathbb{R}^+$, whereas the current configurations history can be described as a set that includes one reference configuration at $t=0$ and all the current configurations for time $t>0$. In space-time, time constitutes the first coordinate and the rest three are spatial coordinates.  Mathematically, the 4-dimensional reference configuration history is given as, $\mathfrak{B}_{r} = \left\lbrace \mathcal{M}_r|^t \right\rbrace_{t \in \mathbb{R}}$ and the current configurations history as, $\mathfrak{B}_{c} = \left\lbrace \mathcal{M}_c|^t \right\rbrace_{t \in \mathbb{R}}$, see  Fig. \ref{fig:configuration histories}.  $\mathcal{M}_r \subset \mathbb{R}^3$ and $\mathcal{M}_c \subset \mathbb{R}^3$ represent the reference and the  current configurations respectively at any time $t$. Time coordinate is along the axis of the cylindrical configuration as shown in Fig. \ref{fig:configuration histories} and cross sections through the reference and the current configuration histories at any time $t$, represent the reference and the current configurations of the body at that time. Deformation of time is not considered in the space-time formulation of classical continuum mechanics, i.e. time coordinate is the same in both reference and current configurations. $\bs{\tilde{G}}$ and $\bs{\tilde{g}}$ represent Minkowski metric tensors in $\mathfrak{B}_{r}$ and $\mathfrak{B}_{c}$ respectively.  They are given by, $\bs{\tilde{G}}=\bs{\tilde{g}}=\left[\begin{array}{cc} {1} & \bs{ 0}^{\mathsf{T} } \\ \bs{0} & \tb{-I} \end{array}\right]$ with $\bs{0}=\left(0\, \, 0\, \, 0\right)^{\mathsf{T}} $. $\tb{I}$ denotes the $(3\times3)$ identity matrix. Overhead tilde $\tilde{(\cdot)}$ is used to denote 4D objects, thus distinguishing them from 3D objects and coordinate indices on a 4D object vary from 1 to 4. The distance between two infinitesimally separated points in $\mathfrak{B}_{r}$ and $\mathfrak{B}_{c}$ are given as:

\begin{equation}
\left(dS\right)^{2} =c_{el}^{2} \, \left(dt\right)^{2} -\left(d\tilde{X}^{2} \right)^{2} -\left(d\tilde{X}^{3} \right)^{2} -\left(d\tilde{X}^{4} \right)^{2}
\end{equation}

\begin{equation}
\left(ds\right)^{2} =c_{el}^{2} \, \left(dt\right)^{2} -\left(d\tilde{x}^{2} \right)^{2} -\left(d\tilde{x}^{3} \right)^{2} -\left(d\tilde{x}^{4} \right)^{2} 
\end{equation} 
where, $\bs{\tilde{X}}$ and $\bs{\tilde{x}}$ are respectively the four dimensional coordinates of $\mathfrak{B}_{r}$ and $\mathfrak{B}_{c}$.  $\bs{X} = \left( \tilde{X}^2 , \tilde{X}^3 , \tilde{X}^4   \right)^{\mathrm{T}}$ and $\bs{x} = \left( \tilde{x}^2 , \tilde{x}^3 , \tilde{x}^4   \right)^{\mathrm{T}} $ represent the spatial coordinates of $\mathfrak{B}_{r}$ and $\mathfrak{B}_{c}$, respectively. We have,  $\tilde{x}^1 = \tilde{X}^1 = c_{el} t = \hat{t},$ where $\hat{t}$ has the unit of length. $c_{el}$ is a material parameter having the unit of velocity. $c_{el}$ ensures consistency among units of space and time coordinates. Next, we define a 4D deformation map that relates space-time reference and current configurations as follows: $\tilde{x}^{i} ={\tilde{\chi}} ^{i} \left(\bs{X},t\right)=\hat{\chi }^{i} \left(\bs{X},\hat{t}\right)$; where, $\bs{\tilde{\chi}}$ and $\bs{\hat{\chi}}$ respectively denote the deformation maps with real time and scaled time. Now using the 4-dimensional deformation map, the coordinate differentials in the current configuration history may be given as:

\begin{equation}
d\tilde{x}^1 = d \tilde{X}^1
\end{equation}

\begin{eqnarray} 
d\tilde{x}^{\alpha+1} =\frac{\partial \hat{\chi }^{\alpha+1} }{\partial \hat{t}} \: d\hat{t} + \frac{\partial \hat{\chi}^{\alpha+1} }{\partial X^{\beta} } \: dX^{\beta} =\hat{v}_{\alpha} \: d\hat{t} + {F}_{\alpha \beta} \: dX^{\beta} 
\label{eqD:coordinate diff. curr. config.-2} 
\end{eqnarray}
where, $ \alpha, \beta \in \left\lbrace1,2,3\right\rbrace $. $\bs{F}$ is the classical three dimensional deformation gradient. $\hat{v}_{\alpha} = \frac{\partial \hat{\chi }^{\alpha+1} }{\partial \hat{t}} = \frac{1}{c_{el}} v_{\alpha} $ is the scaled material velocity and $v_{\alpha}$ is the material velocity. Note that the temporal derivative of the deformation map in Eq. \ref{eqD:coordinate diff. curr. config.-2} does not appear in classical elasticity theory, wherein the deformed configuration is assumed as a snapshot of  $\mathfrak{B}_{c}$ at time $t$. Unlike our space-time formulation, time is not a coordinate in classical elasticity.   We are now ready to extend the space-time continuum formulation of elastodynamics to the case of ductile damage. 

%-----------------------------------------------------------------------------------%
\subsection{Gauge invariance under local translational symmetry} 

 The 4D deformation map $\bs{\tilde{\rchi}}$ is given as:

\begin{eqnarray}
\tilde{x}^1  = \tilde{\rchi}^1 \left( t \right)= \hat{\rchi}^1 \left( \hat{t} \right)\\
\tilde{x}^{\alpha+1}  = \tilde{\rchi}^{\alpha+1} \left(t, \bs{X} \right)= \hat{\rchi}^{\alpha+1} \left( \hat{t}, \bs{X} \right)
\end{eqnarray}
where, $\alpha \in \left\lbrace 1,2,3 \right\rbrace$. Once more, we recall that deformation in time is not considered in the space-time continuum elastodynamics, i.e. $\tilde{x}^1 = \tilde{X}^1$. Now  we consider the local translation in 4D $\bs{\tilde{\rchi}}$ map and attempt to construct the gauge invariant energy density under local translations in space and time. The local translation in the 4D deformation map may be written as:

\begin{equation}
^\backprime{\bs{\tilde{\rchi}}} = \bs{\tilde{\rchi}} \left(\hat{t}, \bs{X} \right) + \bs{\tilde{\Upsilon}} \left(\hat{t}, \bs{X} \right)  
\label{eqD: Translation in space-time}
\end{equation}

Denoting $ \tilde{\Upsilon}^1\left(\hat{t}, \bs{X} \right) = \tau \left( \hat{t},\bs{X} \right) $ and $ \tilde{\Upsilon}^{\alpha+1} \left(\hat{t}, \bs{X} \right) = {\Upsilon}^{\alpha} \left(\hat{t}, \bs{X} \right) $, we may write the local translation in the 4D deformation map more explicitly as:

\begin{eqnarray}
^\backprime{\hat{t}} = \hat{t} + \tau \left( \hat{t},\bs{X} \right)
\label{eqD:Translation in time}
\end{eqnarray}

\begin{equation}
^\backprime{\rchi}^{\alpha}= \rchi^{\alpha} \left(\hat{t}, \bs{X} \right) + {\Upsilon}^{\alpha} \left(\hat{t}, \bs{X} \right)  
\label{eqD: Translation in space}
\end{equation}
Note that in Eq. \ref{eqD:Translation in time}, the time coordinate $\hat{t}$ is the same at all material points.  $\tau$ and $\bs{\Upsilon}$ represent local (inhomogeneous) translations in time and space respectively. From Eqs. \ref{eqD:Translation in time} and \ref{eqD: Translation in space}, it is clear that, $d^\backprime{\hat{t}} \neq d \hat{t} $ and $d^\backprime{\rchi}^{\alpha} \neq d \rchi^{\alpha}$. We exploit minimal replacement to define the gauge covariant operator in place of ordinary partial derivative so as to restore the invariance of $d\hat{t}  $ and $d\rchi^\alpha$ under local translations in space-time.

%----------------------------------------------------------------------------------------------------

\subsubsection{Minimal replacement}

 The gauge covariant definition of the ordinary time differential is given as:

\begin{eqnarray}
\nonumber
D\hat{t} & :=  & \mathcal{M} \big{\langle} d\hat{t} \big{\rangle}  \\
& = & d\hat{t} + \Psi_i d{\tilde{X}}^i
\end{eqnarray}
where the scalar-valued 1-form $\Psi$ may be written as, $\Psi = \Psi_t d\hat{t} +  \Psi_{\bs{X}} \bs{dX} $. By sacrificing some generality in the model, we may choose the space-time gauge covariant temporal derivative as:

\begin{equation}
D\hat{t} = c_{el} \left(1+ \Psi_t \right) dt
\label{eqD:CoVD_time}
\end{equation}
Specifically, the spatial component of the temporal gauge compensating field is assumed to be zero in Eq. \ref{eqD:CoVD_time}. A direct consequence of $\Psi_{\bs{X}}$ being dropped from the model is that local time translation is not directly coupled with local translations in space. The minimal replacement on the corresponding tangent space object of $d\hat{t}$ is given as:

\begin{equation}
\mathcal{M} \bigg{\langle} \frac{\partial}{\partial \hat{t}} \bigg{\rangle} = \frac{1}{\left(1+ \Psi_t \right)} \frac{\partial}{\partial \hat{t}}
\label{eqD: Minimal replacement on d/dt}
\end{equation}

Similarly the minimal replacement on $\partial_i{\rchi}^\alpha$ is given as:

\begin{equation}
\mathcal{M} \big{\langle} \partial_i{\rchi}^\alpha \big{\rangle} = \partial_i{\rchi}^\alpha - \bar{\Phi}^\alpha_i
\end{equation}

Minimal replacement introduces gauge compensating fields $\Psi$ and $\bs{\bar{\Phi}}$ in time and space, respectively; $i$ ranges from 1 to 4.  Note that, $D ^\backprime{\hat{t}} = D \hat{t} $ and $\mathcal{M} \big{\langle} d^\backprime{\rchi}^{\alpha} \big{\rangle}= \mathcal{M} \big{\langle} d \rchi^{\alpha} \big{\rangle}$ using the transformations given below in Eqs. \ref{eqD:transf. of Psi} and \ref{eqD:transf. of Phi},

\begin{equation}
^\backprime{\Psi} = \Psi - \partial_i {\tau}
\label{eqD:transf. of Psi}
\end{equation}
\begin{equation}
^\backprime{\bar{\Phi}}^\alpha_i= \bar{\Phi}^\alpha_i + \partial_i{\Upsilon}^\alpha
\label{eqD:transf. of Phi}
\end{equation}

%-----------------------------------------------------------------------------------%
\subsubsection{Gauge invariant kinematics}

We may write the 4D coordinate differentials in reference and current configurations using the space-time gauge covariant derivatives as follows:

\begin{equation}
d \bs{\tilde{X}} = \left(D \hat{t} \;\;\  d\tilde{X}^2 \;\;\  d\tilde{X}^3 \;\;\  d\tilde{X}^4  \right)
\end{equation}

\begin{equation}
d\bs{\tilde{x}} = \left(D \hat{t} \;\;\  d\tilde{x}^2 \;\;\  d\tilde{x}^3 \;\;\  d\tilde{x}^4  \right)
\end{equation}
Using the Minkowski metric, distance between two infinitesimally close points in the reference configuration, $\mathcal{M}_r$ and the current configuration, $\mathcal{M}_c$ may be computed by the expressions given below in Eqs. \ref{eqD: distance measure in reference configuration} and \ref{EqD: distance measure in current configuration}, respectively.

\begin{equation}
\left(dS \right)^2 = \tilde{G}_{ij} d\tilde{X}^i d\tilde{X}^j
\label{eqD: distance measure in reference configuration}
\end{equation}

\begin{equation}
\left(ds \right)^2 = \tilde{g}_{ij} d\tilde{x}^i d\tilde{x}^j
\label{EqD: distance measure in current configuration}
\end{equation}
We may express the coordinate differentials in the current configuration history in terms of those in the reference configuration as:

\begin{equation}
d\tilde{x}^1 = D \hat{t} = d\hat{t} + \Psi
\label{eqD: Current config. differentials in terms of ref. config-I}
\end{equation}

\begin{equation}
d\tilde{x}^{\alpha+1} = d x^\alpha = \left(  \frac{1}{\left(1+ \Psi_t \right)} \frac{\partial \tilde{\rchi}^{\alpha+1}}{ \partial \hat{t}} -\bar{{\Phi}}^\alpha_1 \right) D\hat{t} + \underbrace{\left( \frac{\partial \tilde{\rchi}^{\alpha+1}}{\partial X^\beta} -\Phi^\alpha_\beta \right)}_{\bs{F^e}} d X^\beta 
\label{eqD:Crefconfig-II}
\end{equation}
where, ${F}_{\alpha\beta} =  \frac{\partial \tilde{\rchi}^{\alpha+1}}{\partial X^\beta} $ denotes the classical deformation gradient and ${F}^e_{\alpha \beta} := \left( \frac{\partial \tilde{\rchi}^{\alpha+1}}{\partial X^\beta} -\Phi^\alpha_\beta \right)$. The scaled material velocity is given by, $\bs{\hat{v}}_{\alpha} =  \frac{\partial \tilde{\chi}^{\alpha+1}}{ \partial \hat{t}}$ and $\Phi^\alpha_\beta = \bar{\Phi}^\alpha_{1+\beta}. $ Note that, $ \alpha, \beta \in \left\lbrace 1,2,3 \right\rbrace$. We may express Eqs. \ref{eqD: Current config. differentials in terms of ref. config-I} and \ref{eqD:Crefconfig-II} in a more compact form as:

{\renewcommand{\arraystretch}{1.7}
\begin{equation}
\begin{bmatrix}
D \hat{t} \, \\
d\bs{x} 
\end{bmatrix}
=
\begin{bmatrix}
\left(1+ \Psi_t \right) & \bs{0}^{\mathsf{T}} \\
\bs{\hat{v}} - \left(1+ \Psi_t \right) \bar{{\Phi}}_1 & \bs{F}^e
\end{bmatrix}
\begin{bmatrix}
d\hat{t} \\
d\bs{X} 
\end{bmatrix}
\label{eqD:Current config. diff. in terms of ref. config}
\end{equation}}
\quad 

From Eq. \ref{eqD:Current config. diff. in terms of ref. config}, we may notice that the temporal components of the gauge compensating fields $\bar{{\Phi}}_1$ and $\Psi_t$ modify velocity which may lead to a correction in the pseudo-forces and momenta; see \cite{roy2020cauchy}. In line with our presently adopted strategy of space-time geometric decoupling -- assuming $\bar{{\Phi}}^\alpha_1 = 0$, we may obtain the 4D pulled back metric using Eqs. \ref{eqD:Current config. diff. in terms of ref. config} in Eq. \ref{EqD: distance measure in current configuration} as:

{\renewcommand{\arraystretch}{2.5}
\begin{equation}
\bs{\tilde{C}} = \begin{bmatrix}
\left(1+\Psi_t \right)^2 - \bs{\hat{v}}^{\mathsf{T}} \bs{\hat{v}} &  -  \bs{\hat{v}}^{\mathsf{T}} \bs{F}^e \\
- \bs{F}^{e\mathsf{T}} \bs{\hat{v}}   &  - \bs{C}^e
\end{bmatrix}_{\left(4 \times 4 \right)}
\end{equation}}
where, $\bs{C}^e = \bs{F}^{e\mathsf{T}} \bs{F}^e $. Now defining the gauge transformation, $\bs{\Phi} = \bs{F} \bs{\hat{\Phi}}$ and using the second term on the right hand side of Eq. \ref{eqD:Crefconfig-II}, We get:

\begin{equation}
\bs{F}^e = \bs{F} \left(\tb{I} - \bs{\hat{\Phi}} \right)
\label{eqD: Reln between Fe, F and Phi}
\end{equation}

Denoting $\left(\tb{I} - \bs{\hat{\Phi}} \right) = \bs{F}^{p-1}$ in Eq. \ref{eqD: Reln between Fe, F and Phi}, the multiplicative decomposition of the deformation gradient $\bs{F}= \bs{F}^e \bs{F}^p$, is recovered, even though an introduction of the so called intermediate configuration (the structure space, which is not a manifold) is not required explicitly in the present model. Next, from the existing gradient based plasticity theories \citep{lele2009large,gurtin2010mechanics}, we may write $\bs{F}^e$ and $\bs{F}^{p}$ in terms of the equivalent plastic strain $(\gamma^p)$ as follows:

\begin{equation}
\bs{F}^e = \bs{F} \, \exp \Big( - \gamma^p \bs{N}^p  \Big)
\label{eqD:Fe in terms of gammaP}
\end{equation}

and,

\begin{equation}
\bs{F}^{p}=  \exp \Big(\gamma^p \bs{N}^p \Big)
\label{eqD: Fp and gamma_p relation}
\end{equation}

The equivalent plastic strain $\gamma^p$ is defined as:  

\begin{equation}
\gamma^p = \int_t \nu^p \, dt
\label{eqD:equivalent plastic_strain}
\end{equation}

where, $\bs{N}^p$ and $\nu^p$ denote plastic flow direction tensor and plastic strain rate, respectively. Using Eq. \ref{eqD: Fp and gamma_p relation}, we may re-write the gauge compensating field as:

\begin{equation}
\hat{\bs{\Phi}} =  \tb{I} - \exp \Big( -\gamma^p \bs{N}^p \Big) 
\label{eqD: Reln between Phi and Gamma_p}
\end{equation}

The motivation for expressing the gauge compensating field $\hat{\bs{\Phi}}$, which has emerged due to local translational symmetry in terms of equivalent plastic strain is two-fold. First one is to reconcile the present theory with the existing viscoplasticity theories which is useful for experimental verification and calibration purposes. Second and the important one is to give a geometric interpretation of the plastic quantifier. Now we construct the invariant energy density associated with the energetic part of the plastic deformation using the gauge compensating field ($\hat{\bs{\Phi}}$) in the next section.

%=============Invariant energy density=======================================
\subsubsection{Invariant energy density}

This subsection presents the construction procedure of the invariant energy density using various components of the 4D pulled back metric $\bs{\tilde{C}}$ and the invariants of $\bs{C}^e$ under space-time local translation. The invariant energy density may be written as:

\begin{equation}
\bar{\Pi} =  \Pi^e_k +  \Pi^e  + \Pi^z  
\label{eqD:Total invariant energy,Pi}
\end{equation}
The first term in the energy density may be constructed using the (1,1) component of the 4D pulled back metric:

\begin{equation}
\Pi^e_k  = k_{11} \left(1+\Psi_t \right)^2 -\frac{k_{12}}{c_{el}^2} \,\bs{v}^{\mathrm{T}} \bs{v}
\end{equation}
The second term in Eq. \ref{eqD:Total invariant energy,Pi} representing the energy density due to the elastic deformation can be decomposed as the sum of the volumetric and the isochoric parts:

\begin{equation}
\Pi^e =  \Pi^e_{vol} + \Pi^e_{ic} 
\label{eqD:energy_density_elas} 
\end{equation}
where the  volumetric part may be expressed as: 

\begin{equation}
\Pi^e_{vol} = k_2 \left[\left(\text{det} \bs{C}^e\right)^{\frac{1}{2}} -1 \right]^2
\end{equation}
The isochoric part is given as:

\begin{equation}
\Pi^e_{ic} = k_3 \left[C^e_{\mu \nu} G^{\mu \rho} G^{\nu \sigma} C^e_{\rho \sigma}\left(\text{det} \bs{C}^e \right)^{-\frac{2}{3}} - 3 \right] 
\end{equation}

In the present theory, we take the material parameters the same as in \cite{roy2020cauchy}, i.e. $k_{12} = 0.5 \rho c_{el}^2,\, k_2 = \frac{1}{2}(\lambda + 2\mu/n_s)$ and $k_3 = \mu/8$.  $n_s$ denotes the dimension of the space. The Lam\'e parameters are $\lambda$ and $\mu$. The material constants $k_{11}$ is taken as zero. Finally, we construct the invariant energy density associated with the gauge compensating field $\hat{\bs{\Phi}}$. From Eq. \ref{eqD: Reln between Phi and Gamma_p}, we may write:

\begin{equation}
\bs{d}\hat{\bs{\Phi}} =  \exp\Big(-\gamma^p \bs{N}^p  \Big) \left[\bs{N}^p \left(d\gamma^p + \frac{1}{c_{el} \left(1+\Psi_t \right)} \dot{\gamma}^p \mathrm{D}\hat{t} \right) \right]
\end{equation}

 Consider a 1-form, $\bs{z} = z_{i} {d X}^{i}$,\: where $i$ varies from 1 to 4. The 1-form, $\bs{z}$ is  defined as:

\begin{equation}
\bs{z} :=  \left( \bs{F}^{p-1} \bs{N}^{p} \right)^{-1} \bs{d} \bs{\Phi}
\end{equation}
Then the invariant energy density, $\Pi_z$ is given as,
\begin{eqnarray}
\nonumber
\Pi^{z} &=&  k_z \, {z}_\mu \tilde{G}^{\mu \nu} {z}_\nu\\
&=&   \frac{k_z}{c_{el}^2 \left(1+ \Psi_t \right)^2} \left(\dot{\gamma}^p \right)^2  -  k_z \bs{\nabla} \gamma^p \cdot \bs{\nabla}\gamma^p
\label{eqD:Pi^z}
\end{eqnarray}

%=================================================================================================
\subsection{Conformal transformation and gauge invariance}

In this subsection, we discuss the conformal symmetry along with scale-invariant transformation on the reference metric. First, we present the concept of local symmetry under conformal transformation and outline the minimal replacement procedure, which is followed by a Weyl-like condition and transformations of the gauge compensating field variables. Finally, the construction of a gauge invariant curvature tensor and invariant energy density pertaining to the ductile deformation is presented.

%-------------------------------------------------------------------------------
\subsubsection{Scaling symmetry and Weyl-like condition}

By global conformal transformation on the  metric, we imply a uniform scaling of the reference metric, i.e. $^\backprime{\bs{\tilde{G}}} = e^{f} \bs{\tilde{G}}$, where the scaling variable $f$ is a scalar  constant. This implies that the metric $\bs{\tilde{G}}$  is scaled homogeneously over the entire body and under such global/homogeneous conformal transformation, the Christoffel symbols associated with the metric compatible connection will remain invariant. However, under local conformal transformation of the reference metric involving a non-uniform scaling, $ ^\backprime{\bs{\tilde{G}}} = e^{f(\bs{X},t)} \bs{\tilde{G}}$, where $f$ is a smooth scalar-valued function in space-time, the  invariance of the metric-compatible connection is lost and some additional terms arise. Minimal replacement is required to compensate for the additional terms. We define the minimal replacement by imposing the Weyl-like condition on the reference configuration metric as follows:

\begin{equation}
\mathcal{M} \big\langle \partial_\sigma {\tilde{G}}_{\mu \nu} \big\rangle = \partial_\sigma {\tilde{G}}_{\mu \nu} +  2\Xi_\sigma \tilde{G}_{\mu \nu}
\label{minimal replacement}
\end{equation}
where, $\bs{\Xi}=\Xi_\mu \text{d}X^{\mu}$ denotes a 1-form field in space-time. Using the minimal replacement constructed in Eq. \ref{minimal replacement}, the modified Christoffel symbols are given as:

\begin{equation}
\overline{\Gamma}^\rho _{\sigma \mu} = \Gamma^\rho _{\sigma \mu} + \tilde{G}^{\rho \nu} \left[\tilde{G}_{\mu \nu}  {\Xi}_\sigma + \tilde{G}_{\sigma \nu} \Xi_\mu  - \tilde{G}_{\sigma \mu} \Xi_\nu \right]
\label{modified connection}
\end{equation}
Note that the indices, $(\rho, \sigma, \mu, \nu)$ vary from 1 to 4. With this, the modified connection vis-\`a-vis the Christoffel symbol in Eqn. \ref{modified connection} remains unchanged i.e. 
$^\backprime\overline{\Gamma}^\rho _{\sigma \mu} = \overline{\Gamma}^\rho _{\sigma \mu}$, under the following simultaneous transformations on $\bs{\tilde{G}}$ and $\bs{\Xi}$,

\begin{equation}
\bs{^\backprime{{\tilde{G}}}} = e^{f(\bs{X},t)} \bs{\tilde{G}}
\label{transformation on G}
\end{equation}

\begin{equation}
\bs{^\backprime{\Xi}} = \bs{\Xi} - \frac{1}{2} df
\label{transformation on gamma}
\end{equation}

 The 1-form, $\bs{\Xi}$ may be decomposed as the sum of an anti-exact part and an exact differential of some scalar valued function, $\bs{\Xi} = \mathfrak{F}(\phi)\, \bs{\Xi}_{ae} + d\phi$; where, $\bs{\Xi}_{ae}$ is the anti-exact part of the 1-form and $d\phi$ represents the exact differential  of the real valued scalar function $\phi$ in space-time. $\mathfrak{F}(\phi)$ represents a scalar valued function of $\phi$. In component form, $\left(\Xi_{ae}\right)_\mu = \left(\bar{\Xi}_t \:\: \bar{\Xi}_1 \:\: \bar{\Xi}_2 \:\: \bar{\Xi}_3\right)$; wherein the temporal component is given as, $\left(\Xi_{ae}\right)_1 = \bar{\Xi}_t d\hat{t}$ and $\bs{\bar{\Xi}}= \bar{\Xi}_\alpha dX^\alpha$ denotes the spatial components, $\alpha \in \{1,2,3\}$. If the anti-exact part of the 1-form, $\bs{\Xi}_{ae}$  is zero, then $\bs{\Xi} = d\phi$ and the transformation of $\phi$ is given as: $^\backprime{\phi} = \phi -\frac{1}{2} f$.

%-----------------Gauge invariant curvature tensor--------------------------
\subsubsection{Gauge invariant curvature tensor}
Using the modified connection defined in Eq. \ref{modified connection}, we construct a space-time gauge invariant curvature given as:

\begin{equation}
\bar{R}^\rho _{\sigma \mu \nu} =  \partial_\mu \overline{\Gamma}^\rho _{\nu \sigma} -\partial_\nu \overline{\Gamma}^\rho _{\mu \sigma} + \overline{\Gamma}^\rho _{\mu \lambda} \overline{\Gamma}^\lambda _{\nu \sigma} -\overline{\Gamma}^\rho _{\nu \lambda} \overline{\Gamma}^\lambda _{\mu \sigma}
\label{eq:space-time curvature}
\end{equation}
Contracting the $1^{st}$ and $3^{rd}$ indices in the curvature given in Eq. \ref{eq:space-time curvature}, we get $\bar{R}^\rho _{\sigma \rho \nu} = \bar{R} _{\sigma \nu}$, which we may express explicitly in terms of the Christoffel symbols as follows:

\begin{equation}
\bar{R} _{\sigma \nu} = \partial_\rho \overline{\Gamma}^\rho _{\nu \sigma} -\partial_\nu \overline{\Gamma}^\rho _{\rho \sigma} + \overline{\Gamma}^\rho _{\rho \lambda} \overline{\Gamma}^\lambda _{\nu \sigma} -\overline{\Gamma}^\rho _{\nu \lambda} \overline{\Gamma}^\lambda _{\rho \sigma}
\end{equation}
We construct scalar gauge invariant curvature by contracting the second order gauge invariant curvature tensor, $\bar{R} _{\sigma \nu}$ with metric as follows: $\bar{R} = \tilde{G}^{\sigma \nu} \bar{R}_{\sigma \nu}$. Since we have a referential Minkowski metric, the Riemmanian curvature associated with the metric compatible connection is zero; but the gauge invariant curvature is not. The non-zero nature of the gauge invariant scalar curvature is solely due to the minimal replacement used to make the connection invariant. An explicit expression for the curvature in terms of the scalar valued function $\phi$ and the anti-exact part $\bs{\Xi}_{ae}$ is given as, $\bar{R} = \bar{R}_t + \bar{R}_X$ as follows:

\begin{equation}
\bar{R}_X = 6 \left[ \mathfrak{F}(\phi) \nabla \cdot \bs{\bar{\Xi}} + {\nabla} \mathfrak{F}(\phi) \cdot \bs{\bar{\Xi}} + \nabla \cdot \nabla \phi + \mathfrak{F}(\phi)^2 \bs{\bar{\Xi}} \cdot \bs{\bar{\Xi}} + 2\mathfrak{F}(\phi)\bs{\bar{\Xi}} \cdot \nabla \phi + \nabla\phi \cdot \nabla\phi  \right]
\end{equation}

\begin{equation}
\bar{R}_t = - \frac{6}{c_{el}} \left[\mathfrak{F}(\phi) \frac{\partial \Xi_t}{\partial t} +  \frac{1}{c_{el}} \left( \ddot{\phi} +  \dot{\phi}^2 \right)  +  \left( \frac{\partial \mathfrak{F}(\phi)}{\partial t} + \mathfrak{F}(\phi) \dot{\phi} \right) \Xi_t\right] +  6\mathfrak{F}(\phi)^2 \Xi_t^2    
\end{equation}

It is worth mentioning that, $\bs{\Xi}_{ae}$ is related with the polarization vector and scalar potential of the electric field; see \cite{roy2020cauchy}. Our present focus is however on a basic model for ductile deformation and thus we do not consider the electromagnetic effects of dielectrics. Indeed a unified electro-magneto-mechanical ductile damage model is kept as part of the future scope of work. We consider $\bs{\Xi}_{ae}$ as constant in both space and time, i.e. $\bs{\bar{\Xi}} = \bs{\hat{e}}_\alpha, \alpha\in \{1,2,3\}$ where $\bs{\hat{e}}$ represents a unit vector in space and $\Xi_t = k_p $. $k_p$ is a real scalar constant. Assuming $\mathfrak{F}(\phi)=k_{\phi} \left(1- \phi \right)$ as a linear function of $\phi$ for simplification and neglecting the higher order derivative terms, we may recast $\bar{R}_X$ and $\bar{R}_t$  as:

\begin{equation}
\bar{R}_X = 6 \left[k_\phi  \bs{\hat{e}} \cdot \nabla\phi \left(1 - 2 \phi\right)  +  k_{\phi}^2 \left(1- \phi \right)^2  + \nabla\phi \cdot \nabla\phi  \right]
\label{eq:R_X}
\end{equation}

\begin{equation}
\bar{R}_t =  \frac{6}{c_{el}} \left[- \frac{1}{c_{el}}  \dot{\phi}^2 + k_p k_\phi \phi\, \dot{\phi} \right] - 6k_p^2 k_{\phi}^2 \left(1- \phi \right)^2 
\label{eq:R_t}
\end{equation}

The invariant energy density associated with the gauge invariant curvature may be written as:

\begin{equation}
\Pi^{\bar{R}} = k_{x} \bar{R}_X + k_{t} \bar{R}_t
\end{equation}

%====================Principle of virtual-power=========================%
\section{Principle of virtual-power and force balance} 
\label{vd:forceB}
 
Using the principle of virtual power, we obtain the macro- and micro-force balances pertaining to the ductile damage response in a finite deformation setup. The principle of virtual power postulates a balance between the external power $\mathcal{W}^{\text{ext}}(\mathcal{P}_t)$ applied on $\mathcal{P}_t$ and the internal power $\mathcal{W}^{\text{int}}(\mathcal{P}_t)$ expended within it, where $\mathcal{P}_t$ is an arbitrary subregion  of the reference body  at time $t$. The external and internal power are given as:

\begin{equation}
\mathcal{W}^{\text{ext}} \left(\mathcal{P}_t \right) = \int_{\partial \mathcal{P}_t} \hat{\tb{t}}(\bs{n}) \cdot \dot{\bs{\rchi}} \, dA + \int_{\mathcal{P}_t} \left[ \tb{b} \cdot \dot{\bs{\rchi}} - \, \zeta_{\gamma} \ddot{\gamma}^p  \dot{\gamma}^p - \zeta_\phi \ddot{\phi}\,\dot{\phi}\right] dV   +  \int_{\partial \mathcal{P}_t} \left[\Lambda_\gamma (\bs{n}) \, \dot{\gamma}^p + \Lambda_\phi(\bs{n})\, \dot{\phi} \right] dA 
\end{equation}

\begin{equation}
\mathcal{W}^{\text{int}} \left(\mathcal{P}_t \right) = \int_{\mathcal{P}_t}  \left[\tb{T}^e \colon \dot{\bs{F}}^e + \pi_\gamma \dot{\gamma}^p + \pi_\phi \dot{\phi} + \bs{\xi}_\gamma \cdot \bs{\nabla} \dot{\gamma}^p + \bs{\xi}_\phi \cdot \bs{\nabla} \dot{\phi} \right] dV  
\label{eqD:int_power}
\end{equation}
Note that $\tb{b}$ represents the macroscopic body force (including inertia) given as, 

\begin{equation}
\tb{b} = \tb{b}_0 - \rho \ddot{\bs{\rchi}} 
\label{eqD:macroscopic body force}
\end{equation}
where, $\tb{b}_0$ denotes the conventional body force. $\zeta_\gamma$ and $\zeta_\phi$ are the coefficients of micro or defect inertia  due to the motion of defects such as dislocations and micro-cracks, respectively. $\Lambda_\gamma$ and $\Lambda_\phi$ are the external microscopic tractions on $\partial{\mathcal{P}_t}$, power conjugate with $\dot{\gamma}^p$ and $\dot{\phi}$ respectively. The scalar microstress $\pi_\gamma$  and the vector microstress $\bs{\xi}_\gamma$ are power conjugates with  $\dot{\gamma}^p$ and $\bs{\nabla} \dot{\gamma}^p $, respectively. Similarly the scalar microstress $\pi_\phi$  and the vector microstress $\bs{\xi}_\phi$ are power conjugates with  $\dot{\phi}$ and $\bs{\nabla} \dot{\phi} $, respectively.  Denoting the generalised virtual velocity by $\mathcal{V}(\bar{\bs{\rchi}}, \bar{\bs{F}}^e,\bar{\gamma}^p, \bar{\phi})$. $\bar{\bs{\rchi}}, \bar{\bs{F}}^e,\bar{\gamma}^p$ and $\bar{\phi}$ are the virtual velocities corresponding to $\dot{\bs{\rchi}}, \dot{\bs{F}}^e,\dot{\gamma}^p$ and $\dot{\phi}$, respectively. From the principle of virtual power, we may write:

\begin{equation}
\mathcal{W}^{\text{ext}}(\mathcal{P}_t,\mathcal{V}) = \mathcal{W}^{\text{int}}(\mathcal{P}_t,\mathcal{V}); \;\;  \forall \text{  generalized virtual velocities} \:  \mathcal{V}.
\label{eqD:virtual power balance-2}
\end{equation}

%=============================Balance_laws===================================%
%\subsection{Balance laws}

To derive the macro-force balance, consider $\bar{\gamma}^p = 0 $ and $\bar{\phi}=0$. Then from Eq. \ref{eqD:virtual power balance-2}, we may write:

\begin{equation}
\int_{\partial \mathcal{P}_t} \hat{\tb{t}}(\bs{n}) \cdot \bar{\bs{\rchi}}  dA + \int_{\mathcal{P}_t}  \tb{b} \cdot \bar{\bs{\rchi}} dV  = \int_{\mathcal{P}_t}  \tb{T}^e \colon \bar{\bs{F}}^e  dV 
\label{eqD:macro_force_balance}
\end{equation}
Using $\bar{\bs{F}}^e = \left(\bs{\nabla} \bar{\bs{\rchi}}\right) \bs{F}^{p-1}$, we may rewrite Eq. \ref{eqD:macro_force_balance} as:

\begin{equation}
\int_{\partial \mathcal{P}_t} \hat{\tb{t}}(\bs{n}) \cdot \bar{\bs{\rchi}}  dA + \int_{\mathcal{P}_t}  \tb{b} \cdot \bar{\bs{\rchi}} dV  = \int_{\mathcal{P}_t}  \tb{T}^e \bs{F}^{p-\mathsf{T}} \colon  \bs{\nabla} \bar{\bs{\rchi}}\, dV 
\label{eqD:macro_forceB}
\end{equation}  
Denoting, $\tb{T} = \tb{T}^e \bs{F}^{p-\mathsf{T}}$ and applying divergence theorem in Eq. \ref{eqD:macro_forceB}, we may write:

\begin{equation}
 \int_{\partial \mathcal{P}_t} \left[ \hat{\tb{t}}(\bs{n})- \tb{T} \bs{n}\right] \cdot \bar{\bs{\rchi}}  dA + \int_{\mathcal{P}_t} \left( \bs{\nabla} \cdot \tb{T}  + \tb{b}  \right) \cdot \bar{\bs{\rchi}} dV = \bs{0}
 \label{eqD:macroforce balance intVariable-1}
\end{equation}
Upon localization of Eq. \ref{eqD:macroforce balance intVariable-1} and using Eq. \ref{eqD:macroscopic body force}, we get:

\begin{equation}
 \bs{\nabla} \cdot \tb{T}  + \tb{b}_0 = \rho \ddot{\bs{\rchi}}
 \label{eqD:macroforce balance intVariable}
\end{equation}
The macroscopic boundary condition may be given as:
   \begin{equation}
   \hat{\tb{t}} (\bs{n}) = \tb{T} \bs{n} \;\;\;\; \text{on} \;\; {\partial \mathcal{P}_t} 
   \end{equation}

For deriving the microscopic force balance associated with plasticity, consider $\bar{\bs{\rchi}} = 0$ and $\bar{\phi}=0$. Then from Eq. \ref{eqD:Fe in terms of gammaP}, we may write:

\begin{equation}
\bar{\bs{F}}^e = - \bar{\gamma}^p \bs{F}^e \bs{N}^p
\end{equation}
Thus, Eq. \ref{eqD:virtual power balance-2}, takes the following form:

\begin{eqnarray}
 \int_{\partial \mathcal{P}_t}  \Lambda_\gamma (\bs{n})  \bar{\gamma}^p \, dA  - \int_{\mathcal{P}_t}  \zeta_\gamma \, \ddot{\gamma}^p \, \bar{\gamma}^p \, dV  = \int_{\mathcal{P}_t}  \left[\pi_\gamma  -  \bs{F}^{e\mathsf{T}} \tb{T}^e  \colon \bs{N}^p \right] \bar{\gamma}^p  dV
 +\int_{\mathcal{P}_t} \bs{\xi}_\gamma  \cdot \bs{\nabla} \bar{\gamma}^p \, dV
\end{eqnarray}
Applying divergence theorem, we get:

\begin{eqnarray}
\int_{\mathcal{P}_t} \left[\bs{\nabla} \cdot \bs{\xi}_\gamma -  \zeta_\gamma \, \ddot{\gamma}^p \right] \bar{\gamma}^p \, dV +  \int_{\partial \mathcal{P}_t}  \left[ \Lambda_\gamma (\bs{n}) - \bs{\xi}_\gamma  \cdot \bs{n} \right] \bar{\gamma}^p \, dA 
  +  \int_{\mathcal{P}_t}  \left[ \bs{F}^{e\mathsf{T}} \tb{T}^e  \colon \bs{N}^p - \pi_\gamma  \right] \bar{\gamma}^p dV = 0
 \label{eqD:microforce balance intVariable-1}
\end{eqnarray}

Employing localization of Eq. \ref{eqD:microforce balance intVariable-1} and using $\tb{T} = \tb{T}^e \bs{F}^{p-\mathsf{T}}$, we get the microscopic force balance for $\gamma^p$ as:

\begin{equation}
\pi_\gamma  +  \zeta_\gamma \, \ddot{\gamma}^p = \bs{\nabla} \cdot \bs{{\xi}}_\gamma + \left( \bs{F}^{e\mathsf{T}} \tb{T} \bs{F}^{p\mathsf{T}} \right)   \colon \bs{N}^p
 \label{eqD:microforce balance intVariable}
\end{equation}

and,

\begin{equation}
 \Lambda_\gamma (\bs{n}) = \bs{{\xi}}_\gamma \cdot \bs{n} \;\;\;\; \text{on} \;\; {\partial \mathcal{P}_t} 
\end{equation}

Since $\bs{N}^p$ is a symmetric and deviatoric tensor \citep{lele2009large,gurtin2010mechanics}, we see that on tensor contraction with $\bs{N}^p $, only the symmetric and deviatoric part of $ \bs{F}^{e\mathsf{T}} \tb{T} \bs{F}^{p\mathsf{T}}$ is nonzero. The symmetric part is defined as, $ \tb{T}^p := \;\; \text{sym}(\bs{F}^{e\mathsf{T}} \tb{T} \bs{F}^{p\mathsf{T}} )$. Next, considering $\bar{\bs{\rchi}} = 0$ and  $\bar{\gamma}^p = 0 $, we may derive the microscopic force balance pertaining to damage. From Eq. \ref{eqD:virtual power balance-2}, we may write:

\begin{eqnarray}
\int_{\partial \mathcal{P}_t} \Lambda_\phi(\bs{n})\, \bar{\phi} dA - \int_{\mathcal{P}_t} \zeta_\phi \ddot{\phi}\,\bar{\phi} dV = \int_{\mathcal{P}_t}  \bs{\xi}_\phi \cdot \bs{\nabla} \bar{\phi}  dV 
+ \int_{\mathcal{P}_t} \pi_\phi   \bar{\phi} dV
\label{eqD:microscopic force balance_phi-1}
\end{eqnarray}
Applying divergence theorem to the first term on the right hand side of Eq. \ref{eqD:microscopic force balance_phi-1}, we get:

\begin{eqnarray}
\int_{\partial \mathcal{P}_t} \left[\Lambda_\phi(\bs{n})- \bs{\xi}_\phi \cdot\bs{n} \right] \bar{\phi}\,dA -\int_{\mathcal{P}_t} \zeta_\phi \ddot{\phi} \bar{\phi} \,dV  =
 -\int_{\mathcal{P}_t} \left( \bs{\nabla} \cdot \bs{\xi}_\phi \right) \bar{\phi} \,dV +
\int_{\mathcal{P}_t} \pi_\phi \bar{\phi} \, dV
\label{eq:microscopic force balance_phi-2}
\end{eqnarray}

Upon localization of Eq. \ref{eq:microscopic force balance_phi-2}, we get the microscopic force balance for $\phi$ as:

\begin{equation}
\zeta_\phi \, \ddot{\phi} =  \bs{\nabla} \cdot \bs{{\xi}}_\phi - \pi_\phi
\label{eq:microscopic force balance_phi}
\end{equation}

and,

\begin{equation}
\Lambda_\phi(\bs{n}) = \bs{{\xi}}_\phi \cdot\bs{n} \;\;\;\; \text{on} \;\; {\partial \mathcal{P}_t} 
\end{equation}

%==========================Constitutive_relations==================================%
\section{Constitutive relations}
\label{vd:constitutiveR}

In this section, we derive the constitutive relations by applying the two laws of thermodynamics. Constitutive closure to the equations of motion is typically achieved by describing a stress-strain affinity for the given material. The goal of the constitutive closure is to express $\tb{T}, \bs{\xi}_\gamma, \bs{\xi}_\phi, \pi_\gamma$ and $\pi_\phi$ in terms of the geometric quantities and derive closed-form expressions for $\zeta_\gamma$ and $\zeta_\phi$. 

\subsection{Evolution of internal energy density}
We derive the evolution of the internal energy density using the first law of thermodynamics, i.e., considering the conservation of total energy of the thermodynamic system and its surroundings \citep{tadmor2012continuum}. Let $\mathcal{E}$ and $W$ denote the total energy and the work done by the external forces on the thermodynamic system respectively. $Q$ is the heat supplied to the system. Then employing the first law of thermodynamics, we may write as: $d \mathcal{E}  = \dbar W + \dbar Q $; where, the total energy $\mathcal{E}$ equals the sum of the kinetic energy $\mathcal{K}$ and the internal energy $U$. Note that $W$ and $Q$ are not state functions, i.e. they depend on the thermodynamic path. The ineaxct differential $\dbar(\cdot)$ represents path dependence. The internal energy may be written in terms of the specific internal energy $e$ as, $U = \int_{\mathcal{P}_t} \rho e dV$. Considering the rate of heat supply $\mathcal{R} := \dbar Q/dt$ and the rate of external work ($ \dbar W /dt$), we may recast the first law of thermodynamics in rate form as follows:

\begin{equation}
\frac{\text{d} \mathcal{E}}{\text{d}t} = \frac{\dbar W}{dt} + \mathcal{R}
\label{eqD:First law in rate form}
\end{equation}
The rates of change of the kinetic and the internal energy may be expressed as:
\begin{equation}
\frac{d\mathcal{K}}{dt} = \int_{\mathcal{P}_t}  \rho \bs{v} \cdot \bs{\dot{v}}  dV
\label{eqD: dK/dt}
\end{equation}

\begin{equation}
  \frac{dU}{dt} = \int_{\mathcal{P}_t} \rho \dot{e} dV
  \label{eqD:du/dt}
\end{equation}

and,
\begin{equation}
\frac{\dbar W}{dt} = \int_{\partial \mathcal{P}_t} \bs{Tn} \cdot \bs{v} dA + \int_{\mathcal{P}_t} \tb{b}_0 \cdot \bs{v} dV + \int_{\partial\mathcal{P}_t} \left[ \Lambda_\gamma(\tb{n}) \, \dot{\gamma}^p + \Lambda_\phi(\tb{n}) \dot{\phi}\right] dA
\label{eqD:P^ext}
\end{equation}

Using $ \Lambda_\gamma(\tb{n}) = \bs{\xi}_\gamma  \cdot \tb{n}$ and  $\Lambda_\phi(\tb{n}) = \bs{\xi}_\phi  \cdot \tb{n} $ in Eq. \ref{eqD:P^ext} and applying divergence theorem to the boundary terms, we get:

\begin{eqnarray}
\nonumber
\frac{\dbar W}{dt} = \int_{\mathcal{P}_t} \left[ \left(\bs{\nabla}_X \cdot \tb{T} + \tb{b}_0 \right) \cdot \bs{v} + \tb{T}\, \bs{\colon} \bs{\dot{F}}  +  \left(\bs{\nabla}_X \cdot \bs{\xi}_\gamma \right) \dot{\gamma}^p + \bs{\xi}_\gamma \cdot  \bs{\nabla}_X \dot{\gamma}^p \right] dV \\
+ \int_{\mathcal{P}_t} \left[ \left(\bs{\nabla}_X \cdot \bs{\xi}_\phi \right) \dot{\phi} + \bs{\xi}_\phi \cdot \bs{\nabla}_X \dot{\phi} \right] dV
\label{eqD:external work}
\end{eqnarray}
Considering the heat flux vector $\bs{q}$ and heat source $h$, the thermal power  may be given as:

\begin{equation}
\mathcal{R} = \int_{\mathcal{P}_t} \left( \rho h - \bs{\nabla}_X \cdot \bs{q} \right) dV
\label{eqD:thermal power}
\end{equation}
Using  Eqs. \ref{eqD: dK/dt}-\ref{eqD:thermal power} in \ref{eqD:First law in rate form}, we may write:

\begin{eqnarray}
\nonumber
\int_{\mathcal{P}_t} \rho \dot{e} dV +  \int_{\mathcal{P}_t}  \rho \bs{v} \cdot \bs{\dot{v}}  dV =  \int_{\mathcal{P}_t} \left[ \left(\bs{\nabla}_X \cdot \tb{T} + \tb{b}_0 \right) \cdot \bs{v} + \tb{T}\, \bs{\colon} \bs{\dot{F}}  + \bs{\xi}_\gamma \cdot  \bs{\nabla}_X \dot{\gamma}^p + \bs{\xi}_\phi \cdot  \bs{\nabla}_X \dot{\phi} \right] dV \\ \nonumber
+  \int_{\mathcal{P}_t}  \left(\bs{\nabla}_X \cdot \bs{\xi}_\gamma \right) \dot{\gamma}^p dV  +  \int_{\mathcal{P}_t}  \left(\bs{\nabla}_X \cdot \bs{\xi}_\phi \right) \dot{\phi} dV +  \int_{\mathcal{P}_t} \left( \rho h - \bs{\nabla}_X \cdot \bs{q} \right) dV \\
\label{eqD:internal energy}
\end{eqnarray}

 Next using the macro-force balance and the micro-force balance given in Eqs. \ref{eqD:macroforce balance intVariable}, \ref{eqD:microforce balance intVariable} and \ref{eq:microscopic force balance_phi}, we may re-write Eq. \ref{eqD:internal energy} in the localized form as follows:

\begin{eqnarray}
\nonumber
\rho \dot{e} -   \zeta_\gamma \dot{\gamma}^p \, \ddot{\gamma}^p - \zeta_\phi \dot{\phi} \, \ddot{\phi}  =   \tb{T}\, \bs{\colon} \bs{\dot{F}}  +  \left(\pi_\gamma  - \bs{F}^{e\mathsf{T}} \bs{T} \bs{F}^{p\mathsf{T}} \colon \bs{N}^p \right) \dot{\gamma}^p 
+\pi_\phi \dot{\phi} \\  + \bs{\xi}_\gamma \cdot  \bs{\nabla}_X \dot{\gamma}^p + \bs{\xi}_\phi \cdot  \bs{\nabla}_X \dot{\phi}  +  \left( \rho h - \bs{\nabla}_X \cdot \bs{q} \right) 
\label{eqD:localized form of internal energy}
\end{eqnarray}

%---------------------------------------------------------------------------------------------------
\subsection{Thermodynamic restrictions}

We now focus on the thermodynamic restrictions to be placed on the constitutive relations. These restrictions are a result of the compliance of the constitutive relations with the second law of thermodynamics, which states that the entropy production rate should be greater than or equal to zero. We enforce the requirement of the second law of thermodynamics using the local form of Clausius-Duhem inequality given as:

\begin{equation}
\rho \dot{\eta} + \bs{\nabla}_X \cdot \left(\frac{\bs{q}}{\theta} \right) - \frac{\rho h}{\theta} \geq 0
\label{eqD: Clausius-Duhem inequality}
\end{equation}
$\eta$ denotes the specific entropy. The free energy density may be written as:

\begin{equation}
{\Pi} = \Pi^e + \Pi^z + \Pi^{\bar{R}} + \Pi^\theta
\end{equation}
where, the elastic energy density constructed in Eq. \ref{eqD:energy_density_elas}  is modified  to account for ductile damage by multiplying the volumetric and the isochoric parts with the degradation functions $\mathfrak{X}_1$ and $\mathfrak{X}_2$ as follows:

\begin{equation}
\Pi^e = \mathfrak{X}_1(\phi,\gamma^p,\mathcal{J}) \Pi^e_{vol} + \mathfrak{X}_2(\phi,\gamma^p)\Pi^e_{ic}
\end{equation}

Recognizing the  tension-compression asymmetry in the damage process, we define the degradation functions, $\mathfrak{X}_1$ and $\mathfrak{X}_2$  as follows:

\begin{equation}
\mathfrak{X}_1 = \phi^{2\langle \mathcal{J}-1\rangle \mathfrak{P}_\gamma } + \eta_\phi \;\;\; \text{and} \;\;\; \mathfrak{X}_2 = \phi^{2\mathfrak{P}_\gamma }+\eta_\phi
\end{equation}
where, $\mathcal{J} := \text{det}\bs{F}$ and $\eta_\phi$ is a small parameter to prevent ill-conditioning of the elastic stiffness at the complete damage state. We define $\mathfrak{P}_\gamma = \left(\gamma^p/\gamma^p_{crit} \right)^2$ and,

\begin{align}
\langle \mathcal{J}-1\rangle = \left\{ \begin{array}{cc} 
                0 & \hspace{3mm} \text{if}\hspace{3mm} 0< \mathcal{J} < 1 \\
                1 & \hspace{3mm} \text{otherwise} \\
                \end{array} \right.
\end{align}
The free energy density may be expressed as:

\begin{equation}
\Pi = \hat{\Pi} \left( \bs{C}^e, \bs{\nabla}_{X} \gamma^p, \dot{\gamma}^p,{\gamma}^p, \Psi_t,\phi,\dot{\phi},\bs{\nabla}_{X}\phi, \theta \right)
\end{equation}

\begin{eqnarray}
\nonumber
\dot{\Pi} = \frac{\partial \Pi}{\partial \bs{C}^e} \colon \dot{\bs{C}}^e + \frac{\partial \Pi}{\partial \left(\bs{\nabla} \gamma^p \right)} \cdot \bs{\nabla} \dot{\gamma^p} + \frac{\partial \Pi}{\partial{\gamma}^p}  \dot{\gamma}^p  + \frac{\partial \Pi}{\partial \dot{\gamma}^p}  \ddot{\gamma}^p + \frac{\partial \Pi}{\partial \Psi_t}  \dot{\Psi}_t + \frac{\partial \Pi}{\partial \phi}  \dot{\phi} + \frac{\partial \Pi}{\partial \dot{\phi}}  \ddot{\phi} \\ + \frac{\partial \Pi}{\partial \left(\bs{\nabla} \phi \right)} \cdot \bs{\nabla} \dot{\phi} +\frac{\partial \Pi}{\partial \theta}  \dot{\theta}
\label{eqD:Pi_dot}
\end{eqnarray}
The second Piola-Kirchhoff kind of stress is defined as: $\bs{S}^e := 2 \frac{\partial \Pi}{\partial \bs{C}^e} $; where,

\begin{equation}
\bs{S}^e = 2k_2\mathfrak{X}_1 \left[ \left(\text{det} \bs{C}^e \right)^{\frac{1}{2}} -1 \right] \left(\text{det} \bs{C}^e\right)^{\frac{1}{2}} \bs{C}^{e-1} + 4k_3\mathfrak{X}_2 \left[\bs{C}^e - \frac{1}{3} \left(\bs{C}^e \bs{\colon} \bs{C}^e \right) \bs{C}^{e-1} \right] \left(\text{det} \bs{C}^e \right)^{-\frac{2}{3}}
\label{eqD: 2nd Piola Kirchhoff stress}
\end{equation}

Using Legendre transformation, we may write:

\begin{equation}
\dot{\Pi} = \rho \dot{e} - \rho \dot{\theta} \eta - \rho \theta \dot{\eta}
\label{eqD:Legendre transformation-2}
\end{equation}

Now using Eq. \ref{eqD:localized form of internal energy}, \ref{eqD: Clausius-Duhem inequality} in Eq. \ref{eqD:Legendre transformation-2},  we get:

\begin{eqnarray}
\nonumber
-\dot{\Pi}  -  \rho \dot{\theta} \eta  + \zeta_\gamma \,\dot{\gamma}^p\, \ddot{\gamma}^p + \zeta_\phi \dot{\phi} \, \ddot{\phi}  +  \tb{T}\, \bs{\colon} \bs{\dot{F}}  +  \left(\pi_\gamma  - \bs{F}^{e\mathsf{T}} \bs{T} \bs{F}^{p\mathsf{T}} \colon \bs{N}^p \right) \dot{\gamma}^p \\
+\pi_\phi \dot{\phi} + \bs{\xi}_\gamma \cdot  \bs{\nabla}_X \dot{\gamma}^p + \bs{\xi}_\phi \cdot  \bs{\nabla}_X \dot{\phi}  -  \frac{\bs{q}}{\theta} \cdot \bs{\nabla}_X \theta  \geq 0 
\label{eqD: second law of thermodynamics-2}
\end{eqnarray}

Next, we decompose $\pi_\gamma$ into energetic and disipative parts as follows: $\pi_\gamma=\pi^{en}_\gamma +\pi^{dis}_\gamma$. We define $\Psi_t$ in terms of the plastic deformation quantifier as follows: $\frac{k_z}{c_{el}^2 \left(1+ \Psi_t \right)^2} = \mathcal{Z} (\gamma^p, \dot{\gamma}^p, \theta) $ in Eq. \ref{eqD:Pi^z} and using Eq. \ref{eqD:Pi_dot}  in Eq. \ref{eqD: second law of thermodynamics-2},  we get:

\begin{eqnarray}
\nonumber
\left(- \bs{F}^e \bs{S}^e + \tb{T} \bs{F}^{p\mathsf{T}} \right) \colon \dot{F}^e + \left[\bs{\xi}_\gamma - \frac{\partial \Pi}{\partial \left(\bs{\nabla}_X \gamma^p \right)}  \right] \cdot \left(\bs{\nabla}_X \dot{\gamma}^p \right)+  \left(\zeta_\gamma \,\dot{\gamma}^p - \frac{\partial \Pi}{\partial \mathcal{Z}}\frac{\partial \mathcal{Z}}{\partial \dot{\gamma}^p}   \right) \ddot{\gamma}^p 
+ \pi_\gamma^{dis}  \dot{\gamma}^p  \\ \nonumber  + \left(\pi^{en}_\gamma  - \frac{\partial \Pi}{\partial \mathcal{Z}} \frac{\partial \mathcal{Z}}{\partial {\gamma}^p}  \right) \dot{\gamma}^p  + \left[\bs{\xi}_\phi - \frac{\partial \Pi}{\partial \left(\bs{\nabla}_X \phi \right)}  \right] \cdot \bs{\nabla}_X \dot{\phi}  + \left[ \pi_\phi - \frac{\partial \Pi}{\partial \phi} \right] \dot{\phi} 
\\  +  \left(\zeta_\phi \,\dot{\phi} - \frac{\partial \Pi}{\partial \dot{\phi}} \right) \ddot{\phi} 
  - \left(\frac{\partial \Pi}{\partial \theta} + \rho {\eta} \right) \dot{\theta}  - \frac{q}{\theta} \bs{\nabla}_X \theta \geq 0 
\label{eqD: Clausius-Duhem inequality-1}
\end{eqnarray}

Applying the Coleman-Noll procedure to Eq. \ref{eqD: Clausius-Duhem inequality-1}, we may have the following constitutive relations:

\begin{equation}
 \bs{F}^e \bs{S}^e = \tb{T} \bs{F}^{p\mathsf{T}} 
 \label{eqD:second piola}
\end{equation}

\begin{equation}
\bs{\xi}_\gamma = \frac{\partial \Pi}{\partial \left(\bs{\nabla}_X \gamma^p \right)}, \;\;\;\ \bs{\xi}_\phi = \frac{\partial \Pi}{\partial \left(\bs{\nabla}_X \phi \right)}, \;\;\;\       \pi^{en}_\gamma  = \frac{\partial \Pi}{\partial \mathcal{Z}} \frac{\partial \mathcal{Z}}{\partial {\gamma}^p}
\label{eqD:xi}
\end{equation}

\begin{equation}
  \zeta_\gamma = \frac{1}{\dot{\gamma}^p } \frac{\partial \Pi}{\partial \mathcal{Z}}\frac{\partial \mathcal{Z}}{\partial \dot{\gamma}^p} , \;\;\;\  \zeta_\phi = \frac{1}{\dot{\phi}} \frac{\partial \Pi}{\partial \dot{\phi}}, \;\;\;\   {\eta} = - \frac{\partial \Pi}{\rho \partial \theta}, \;\;\;\      \pi_\phi  = \frac{\partial \Pi}{\partial \phi}
 \label{eqD:micro-inertia}
\end{equation}

Thus the Clausius-Duhem inequality in Eq. \ref{eqD: Clausius-Duhem inequality-1}, takes the following form of reduced dissipation inequality:

\begin{equation}
\pi_\gamma^{dis}  \dot{\gamma}^p   - \frac{\bs{q}}{\theta} \cdot \bs{\nabla}_X \theta \geq 0
 \label{eqD:clausius-duhem inequality-2}
\end{equation}

Note that we have $\dot{\gamma}^p \geq 0; \forall \;t \in \mathbb{R}^+$, so the inequality given in Eq. \ref{eqD:clausius-duhem inequality-2} may be satisfied by setting the following: $ \pi_\gamma^{dis} = \mathcal{F}$ and $\bs{q} = - \mathcal{G} \bs{\nabla}_X \theta$, where $\mathcal{F}, \mathcal{G} \geq 0$.

%===============Temperature_evolution_equation===================%
\subsection{Temperature evolution equation}
 Using the constitutive relation for specific entropy given in Eq. \ref{eqD:micro-inertia}, we may write:

\begin{equation}
\dot{\eta} = - \frac{\partial^2 \Pi}{\rho \partial \theta^2} \dot{\theta}
\label{eqD:eta_dot}
\end{equation}
The specific heat constant is defined as:

\begin{equation}
\mathcal{C}_v := - \frac{\theta}{\rho} \frac{\partial^2 \Pi}{\partial \theta^2}
\label{eqD:specific heat constant}
\end{equation}
From Eqs. \ref{eqD:eta_dot} and \ref{eqD:specific heat constant}, we may write:

\begin{equation}
   \mathcal{C}_v =  \frac{\theta}{\dot{\theta}} \dot{\eta}
\label{eqD:specific heat constant-1}
\end{equation}
Using the constitutive relations in Eqs. \ref{eqD:second piola} - \ref{eqD:micro-inertia} and Eqs.  \ref{eqD:Pi_dot}, \ref{eqD:specific heat constant-1} in Eq. \ref{eqD:Legendre transformation-2}, we get:

\begin{equation}
 \dot{\theta} = \frac{1}{\rho \mathcal{C}_v} \left( \pi_\gamma^{dis} \dot{\gamma}^p + \rho h - \bs{\nabla}_X \cdot \bs{q} \right)
\label{eqD: temp. evol. equation}
\end{equation}

%------------------------------------------------------------------------------------------------
\subsection{Specializing of the constitutive relations}
Now we need to specialize the constitutive relations. First, we specify the temporal function $\mathcal{Z}$.  Note that $\mathcal{Z}$ acts as a handle for modelling micro-inertia for different material hardening behaviour. We consider $\mathcal{Z}$ in the power-law form given as:

\begin{equation}
\mathcal{Z} = k_\rho + k_\gamma \left( \gamma^p\right)^{1/n} \left(\dot{\bar{\gamma}}^p \right)^{-1/m} \Theta
\label{eqD:specialization of temporal function}
\end{equation}
where,

\begin{equation}
\Theta = 1- \left( \frac{\theta - \theta_{ref}}{\theta_{melt} - \theta_{ref}} \right)^{r}
\label{eqD:fun_Theta}
\end{equation}
The material constant $r$ is taken as a quadratic function of temperature, $r= \alpha_0 + \alpha_1 \theta_0 + \alpha_2 \theta_0^2$, where $\theta_0$ is the temperature at time $t_0$. The material constants $k_\rho$ and $k_\gamma$ are related to the defect or micro-inertia and the hardening behaviour of the material, respectively. We define $\dot{\bar{\gamma}}^p = \dot{\gamma}^p/\dot{\gamma}_0 $, where $\dot{\gamma}_0$ is the reference strain rate. We need to make a valid choice for $\mathcal{F}$ that does not violate  the dissipation inequality given in Eq. \ref{eqD:clausius-duhem inequality-2}. Assuming $\mathcal{F} = \left[ S_0 +  \left( \mathcal{H}_0 - \left(1-\phi \right) \mathcal{H}_s \right) \left({\gamma}^p \right)^{-1+1/n}  \right] \left(\dot{\bar{\gamma}}^p \right)^{2-1/m} \Theta$ is one of such valid choice for metal plasticity. $S_0$ and $\mathcal{H}_0$ denote the yield strength and a hardening constant, respectively. $\mathcal{H}_s$ represents a negative surface hardening constant due to dislocation absorption at the cracked surfaces satisfying the following constraint, $\mathcal{H}_s \leq \mathcal{H}_0$.

We take the material constants, $k_\rho = 0.5 \rho l_0^2$ and $k_z=-0.5 S_0 l_z^2$ \citep{gurtin2010mechanics}. Next, we correspond some of our material parameters with the material constants of a phase-field based ductile damage model. Comparing Eq. \ref{eq:microscopic force balance_phi} with the phase-field based ductile damage model, we get $k_x = G_c l_{\phi}/6 $, $k_t/c_{el} = M/(6k_\phi k_p)$, where $G_c$ and $M$ denote the critical energy release rate and a mobility constant, respectively. We use  $\bar{k}_t = -k_t/c_{el}$ and $\bar{k}_p = -k_p \sqrt{c_{el}}$. Now $k_\phi$ may be expressed as:

\begin{equation}
k_\phi = \left[\frac{G_c}{24l_\phi \left(k_x + \bar{k}_t \bar{k}_p^2 \right)} \right]^{\frac{1}{2}}
\end{equation}

%----------------------------------------------------------%
\begin{figure}
\begin{subfigure}{0.45\textwidth}
\centering
\includegraphics[width=0.8\linewidth]{./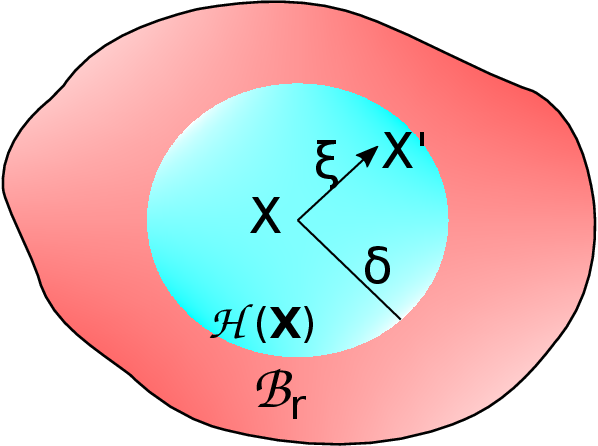}
\caption{Reference configuration}
\label{fig:PD_ref_config}
\end{subfigure}
\hfill
\begin{subfigure}{0.45\textwidth}
\centering
\includegraphics[width=0.8\linewidth]{./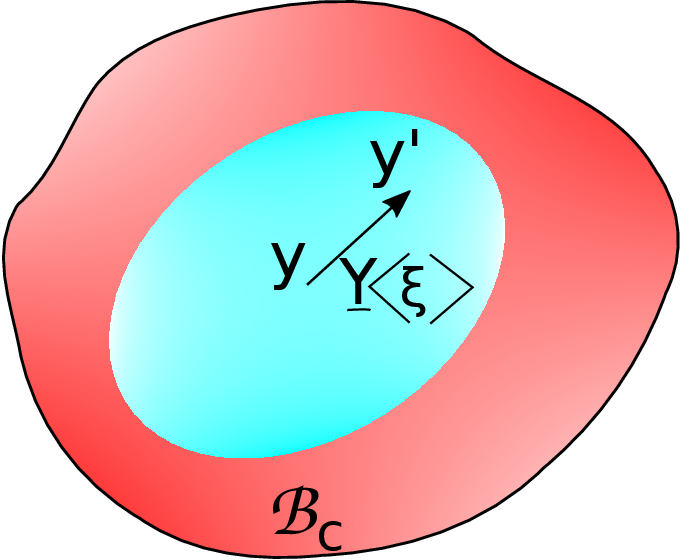}
\caption{Deformed configuration}
\label{fig:PD_curr_config}
\end{subfigure}

\caption{Schematic of PD body with reference and deformed configurations. } 
\label{fig:PD_schematic}
\end{figure}
%==========================Non-ordinary_state_based_PD===================================
\section{Non-ordinary state based PD formulation}
\label{vd:PD}

Keeping in mind possibly large deformation and the attendant difficulties involving extreme mesh distortions in the FEM, we intend to use the peridynamics (PD) approach for purposes of numerical implementation. We first present a brief review of the state-based peridynamics and then describe the PD based formulation for our space-time viscoplastic-damage model. State based PD is by itself a nonlocal continuum theory \citep{silling2007peridynamic}, i.e. PD considers finite distance interaction among the material points in contrast to the classical equations of motion which are based on the principle of local action. The material points interact with the other points in the neighbourhood over a finite distance ($\delta$) called the `horizon'. The horizon $\mathcal{H}(\bs{X})$ (see Fig \ref{fig:PD_schematic}) may be given as, $\mathcal{H}\left(\bs{X}\right) = \{\bs{\xi} \in \mathbb{R}^3 | \left(\bs{\xi} + \bs{X} \right) \in \mathcal{B}_r, |\bs{\xi}|<\delta\}$, where $\delta > 0$ denotes the horizon radius. Non-ordinary state based PD is the most general form of PD where the force state depends on the collective deformation of all the bonds in the horizon and its direction is not restricted to be along the direction of the bond vector --- a feature typical of bond based and ordinary state based PD. The bond vector $\bs{\xi}$ between a material point, $\bs{X} \in \mathcal{B}_r$ and its neighbour $\bs{X}^\prime \in \mathcal{B}_r$ is given as, $\bs{\xi} = \bs{X}^\prime - \bs{X}$. The deformation vector state $\barbelow{\tb{Y}}$, which maps the bond vector $\bs{\xi}$ to its deformed vector state, is defined as:

\begin{equation}
\barbelow{\tb{Y}}\left[\bs{X},t \right] \left\langle \bs{\xi} \right\rangle = \tb{y}^\prime - \tb{y}
\end{equation}
where, $\tb{y}^\prime = \bs{\rchi}(\bs{X}') \in \mathcal{B}_c$ and $\tb{y} = \bs{\rchi}(\bs{X}) \in \mathcal{B}_c$. In state based PD, the governing equations are in the integro-differential form unlike classical continuum mechanics where the PDEs cannot directly treat singularities/cracks. Since the PD formulation involves no derivatives with respect to the spatial coordinates, it allows discontinuities to be treated by its very construction. The balance of linear momentum is given as:

\begin{equation}
\rho\left(\bs{X}\right)\ddot{\tb{y}}\left(\bs{X},t\right)= \int_{\mathcal{H}\left(\bs{X}\right)}\left(\barbelow{\tb{T}}\left[\bs{X},t\right]\left\langle\bs{\xi}\right\rangle-\barbelow{\tb{T}}\left[\bs{X}',t\right]\left\langle-\bs{\xi}\right\rangle \right) dV_{\bs{X}'} + \tb{b}_0\left(\bs{X},t\right)
\label{eq:linear_momentum_PD}
\end{equation} 
where $\rho$, $\barbelow{\tb{T}}$, $\tb{b}_0$ are the mass density, force vector state and the body force density, respectively.  Balance of angular momentum implies that the following constraint is satisfied:

\begin{equation}
\int_{\mathcal{H}\left(\bs{X}\right)} \barbelow{\tb{T}}\left[\bs{X},t\right]\left\langle\bs{\xi}\right\rangle \times \barbelow{\tb{Y}}\left[\bs{X},t\right]\left\langle\bs{\xi}\right\rangle dV_{\bs{X}'}=0 
\label{eq:angular_momentum_PD}
\end{equation}

Constitutive modelling in PD should respect the constraint given in Eq \ref{eq:angular_momentum_PD}. By exploiting an equivalence of
certain quantities (e.g. strain energy density) under homogeneous deformation, a constitutive correspondence is proposed by \cite{silling2007peridynamic} to incorporate classically known constitutive models within the PD framework.

%------------------------------------------------------------------%
\subsection{Kinematic states and non-local approximation}

The state of the body is described by the relative displacement vector state, relative damage field scalar state, relative temperature field scalar state and relative equivalent plastic strain field scalar state. The relative displacement vector state is defined as:

\begin{equation}
\barbelow{\tb{U}}[\bs{X}]\left\langle\bs{\xi}\right\rangle=\tb{u}' - \tb{u}
\label{eq:displacement_state}
\end{equation}
Similarly, the relative damage field scalar state, the relative temperature field scalar state and the relative equivalent plastic strain field scalar state are given below in Eqs.\ref{eq:damage_field_state},\ref{eq:temperature_state} and \ref{eq:plastic_strain_state}, respectively.

\begin{equation}
\barbelow{\Phi}[\bs{X}]\left\langle\bs{\xi}\right\rangle={\phi}' - \phi
\label{eq:damage_field_state}
\end{equation}

\begin{equation}
\barbelow{\Theta}[\bs{X}]\left\langle\bs{\xi}\right\rangle={\theta}' - \theta
\label{eq:temperature_state}
\end{equation}

\begin{equation}
\barbelow{\Gamma}^p[\bs{X}]\left\langle\bs{\xi}\right\rangle={\gamma^p}' - \gamma^p
\label{eq:plastic_strain_state}
\end{equation}
where, $\left(\cdot\right)' = (\cdot)(\bs{X}')$. A nonlocal form of the deformation gradient may be given as:

\begin{equation}
\overline{\tb{F}}\left(\barbelow{\tb{Y}}\right)=\left[\int_{\mathcal{H}\left(\bs{X}\right)} \omega\left(|\bs{\xi}|\right) \left(\barbelow{\tb{Y}}\left\langle\bs{\xi}\right\rangle \otimes \bs{\xi} \right) dV_{\bs{X}'}\right] \overline{\tb{K}}^{-1}
\label{eq:nonlocal_deformation_gradient}
\end{equation}
$\omega\left(|\bs{\xi}|\right)$ denotes the influence function and the shape tensor is defined as:

\begin{equation}
\overline{\tb{K}}= \int_{\mathcal{H}\left(\bs{X}\right)} \omega\left(|\bs{\xi}|\right) \left(\bs{\xi} \otimes \bs{\xi} \right) dV_{\bs{X}'}
\label{eq:shape_tensor} 
\end{equation}

Similarly, the non-local approximation of the damage field gradient vector may be defined as:

\begin{equation}
\overline{\tb{G}}_\phi\left(\barbelow{\Phi}\right)=\left[\int_{\mathcal{H}\left(\bs{X}\right)} \omega\left(|\bs{\xi}|\right) \barbelow{\Phi}\left\langle\bs{\xi}\right\rangle \,\bs{\xi} dV_{\bs{X}'}\right] \overline{\tb{K}}^{-1}
\label{eq:nonlocal_phase_field_gradient}
\end{equation}
The non-local versions of temperature gradient and equivalent plastic strain gradient may be written as:

\begin{equation}
\overline{\tb{G}}_\theta \left(\barbelow{\Theta}\right)=\left[\int_{\mathcal{H}\left(\bs{X}\right)} \omega\left(|\bs{\xi}|\right) \barbelow{\Theta}\left\langle\bs{\xi}\right\rangle \bs{\xi} dV_{\bs{X}'}\right] \overline{\tb{K}}^{-1}
\label{eq:nonlocal_temperature_gradient}
\end{equation}

\begin{equation}
\overline{\tb{G}}_{{\gamma}}\left(\barbelow{\Gamma}^p\right)=\left[\int_{\mathcal{H}\left(\bs{X}\right)} \omega\left(|\bs{\xi}|\right) \barbelow{\Gamma}^p\left\langle\bs{\xi}\right\rangle \bs{\xi} dV_{\bs{X}'}\right] \overline{\tb{K}}^{-1}
\label{eq:nonlocal_equivalent_plastic_strain_gradient}
\end{equation}

%=================================================================================
\subsection{PD balance laws}

In addition to the well-known balances of linear and angular momenta given in Eqs. \ref{eq:linear_momentum_PD} and \ref{eq:angular_momentum_PD} respectively, we propose the following  micro-force balances for plastic deformation and damage. The micro-force balance associated with the damage field is given as:

\begin{equation}
\zeta_\phi \ddot{\phi}\left(\bs{X},t\right)=\int_{\mathcal{H}\left(\bs{X}\right)}
\left(\barbelow{\xi}_\phi\left[\bs{X},t\right]\left\langle\bs{\xi}
\right\rangle-\barbelow{\xi}_\phi\left[\bs{X}',t\right]\left\langle-
\bs{\xi}\right\rangle \right) dV_{\bs{X}'} - {\pi}_\phi\left(\bs{X},t\right)
\label{eq:micro_force_balance_damage_PD}
\end{equation}
Similarly, the micro-force balance for viscoplastic deformation is postulated as:

\begin{equation}
\zeta_\gamma \ddot{\gamma}^p\left(\bs{X},t\right)=\int_{\mathcal{H}\left(\bs{X}\right)}
\left(\barbelow{\xi}_\gamma\left[\bs{X},t\right]\left\langle\bs{\xi}\right
\rangle-\barbelow{\xi}_\gamma\left[\bs{X}',t\right]\left\langle-\bs{\xi}\right\rangle \right) dV_{\bs{X}'} - \bar{\pi}_\gamma\left(\bs{X},t\right)
\label{eq:micro_force_balance_plasticity_PD}
\end{equation}
where, $\barbelow{\xi}_\phi$ and $\barbelow{\xi}_\gamma$ represent the micro-force scalar state associated with the damage field and viscoplastic deformation, respectively. The driving force for plastic deformation is given as: $\bar{\pi}_\gamma\left(\bs{X},t\right) =  \pi_\gamma  - \left(\overline{\bs{F}}^{e\mathsf{T}} {\overline{\tb{T}}} {\overline{\bs{F}}^{p\mathsf{T}}} \right)   \colon \overline{\bs{N}}^p$.  We now show that the micro-force balance associated with the damage field and viscoplastic deformation  given in Eqs. \ref{eq:micro_force_balance_damage_PD} and \ref{eq:micro_force_balance_plasticity_PD}  respectively are globally satisfied.

\begin{prop}
	If the forces arising out of damage field evolution Eq.\ref{eq:micro_force_balance_damage_PD} for a bounded body $\mathcal{B}_r$ holds for any $\tb{X} \in$ $\mathcal{B}_r$ then the corresponding micro-force balance is satisfied globally.
\end{prop}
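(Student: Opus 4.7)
The plan is to obtain the global micro-force balance for $\phi$ by integrating the local (pointwise) balance in Eq.~\ref{eq:micro_force_balance_damage_PD} over the whole reference body $\mathcal{B}_r$ and showing, by a symmetric-bond/Fubini argument, that the nonlocal contribution integrates to zero. The surviving identity,
\begin{equation}
\int_{\mathcal{B}_r} \zeta_\phi \ddot{\phi}(\bs{X},t)\, dV_{\bs{X}} \;=\; -\int_{\mathcal{B}_r} \pi_\phi(\bs{X},t)\, dV_{\bs{X}},
\end{equation}
is precisely the global statement of micro-force balance for the damage field, which is what we want to establish.

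First I would integrate Eq.~\ref{eq:micro_force_balance_damage_PD} over $\bs{X}\in \mathcal{B}_r$ and, using the boundedness of $\mathcal{B}_r$ and the finite horizon $\delta$ (so the measure-theoretic hypotheses of Fubini are clearly met), interchange the order of integration in the double integral over $\mathcal{B}_r\times \mathcal{H}(\bs{X})$. Next, in the term involving $\barbelow{\xi}_\phi[\bs{X}',t]\langle -\bs{\xi}\rangle$, I would perform the change of variables $\bs{X}\leftrightarrow \bs{X}'$, equivalently $\bs{\xi}\to -\bs{\xi}$; note that $\bs{X}'\in\mathcal{H}(\bs{X})$ iff $\bs{X}\in\mathcal{H}(\bs{X}')$, so the domain of integration is preserved. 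This relabelling converts the second piece into exactly the first piece, and the two cancel.

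Concretely, after the interchange one obtains
\begin{equation}
\int_{\mathcal{B}_r}\!\!\int_{\mathcal{H}(\bs{X})} \barbelow{\xi}_\phi[\bs{X},t]\langle\bs{\xi}\rangle\, dV_{\bs{X}'}\, dV_{\bs{X}} \;-\; \int_{\mathcal{B}_r}\!\!\int_{\mathcal{H}(\bs{X}')} \barbelow{\xi}_\phi[\bs{X},t]\langle\bs{\xi}\rangle\, dV_{\bs{X}}\, dV_{\bs{X}'} \;=\; 0,
\end{equation}
so that every bond contribution enters twice with opposite sign and the nonlocal integral vanishes. What remains after this cancellation is exactly the asserted global balance between the inertia term $\zeta_\phi \ddot{\phi}$ and the driving source $\pi_\phi$, integrated over the body.

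The main obstacle, and the point requiring the most care, is the treatment of bonds whose far endpoint $\bs{X}'$ lies outside the body (i.e.\ when $\mathcal{H}(\bs{X})\not\subset \mathcal{B}_r$ near the boundary), because then $\barbelow{\xi}_\phi[\bs{X}',t]$ is not defined by the local constitutive law. The standard resolution, which I would adopt, is to extend $\barbelow{\xi}_\phi$ by zero outside $\mathcal{B}_r$ (consistent with the absence of material there) so that the antisymmetric-bond relabelling $\bs{X}\leftrightarrow\bs{X}'$ makes sense on all of $\mathbb{R}^3\times\mathbb{R}^3$. With this convention the Fubini swap is rigorously justified and the pairwise cancellation goes through, yielding the global micro-force balance. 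An entirely analogous argument, with $\barbelow{\xi}_\phi$, $\zeta_\phi$, $\pi_\phi$, $\phi$ replaced by $\barbelow{\xi}_\gamma$, $\zeta_\gamma$, $\bar{\pi}_\gamma$, $\gamma^p$, would give the companion statement for the plastic micro-force balance in Eq.~\ref{eq:micro_force_balance_plasticity_PD}.
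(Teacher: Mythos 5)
Your proposal is correct and follows essentially the same route as the paper: integrate the pointwise balance over $\mathcal{B}_r$, extend the inner integral to the whole body (the paper does this by noting interactions vanish outside the horizon, you by extending $\barbelow{\xi}_\phi$ by zero outside $\mathcal{B}_r$ — equivalent devices), and then use the relabelling $\bs{X}\leftrightarrow\bs{X}'$ to show the antisymmetric double integral vanishes, leaving $\int_{\mathcal{B}_r}\zeta_\phi\ddot{\phi}\,dV_{\bs{X}}+\int_{\mathcal{B}_r}\pi_\phi\,dV_{\bs{X}}=0$. Your added care about Fubini and boundary bonds is a harmless refinement of the same argument.
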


\begin{proof}
Integrating Eq.\ref{eq:micro_force_balance_damage_PD} over the entire body $\mathcal{B}_r$, we obtain:
	
\begin{equation}
\int_{\mathcal{B}_r} \zeta_\phi \ddot{\phi}\left(\bs{X},t\right) dV_{\bs{X}} = \int_{\mathcal{B}_r} \int_{ \mathcal{H} \left( \bs{X} \right) } \left( \barbelow{\xi}_\phi \left[ \bs{X}, t \right] \left\langle \bs{\xi} \right\rangle - \barbelow{\xi}_\phi \left[ \bs{X}', t \right] \left\langle - \bs{\xi} \right\rangle \right) dV_{\bs{X}'} dV_{\bs{X}} - \int_{\mathcal{B}_r} {\pi}_\phi \left(\bs{X},t\right) dV_{\bs{X}}
\label{eq:micro_force_balance_phase_field_PD_1}
\end{equation}
	
Note that the interaction with a material point $\bs{X}$ vanishes outside the horizon $\mathcal{H} \left( \bs{X} \right)$, so we may extend the inner integral in Eq.\ref{eq:micro_force_balance_phase_field_PD_1} over the entire body $\mathcal{B}_r$ as:	

\begin{equation}
\int_{\mathcal{B}_r} \zeta_\phi \ddot{\phi}\left(\bs{X},t\right) dV_{\bs{X}} = \int_{\mathcal{B}_r} \int_{ \mathcal{B}_r } \left( \barbelow{\xi}_\phi  - \barbelow{\xi}'_\phi \right) dV_{\bs{X}'} dV_{\bs{X}} - \int_{\mathcal{B}_r} {\pi}_\phi \left(\bs{X},t\right) dV_{\bs{X}}
\label{eq:micro_force_balance_phase_field_PD_2}
\end{equation}
We have used the notations $\barbelow{\xi}_\phi = \barbelow{\xi}_\phi \left[ \bs{X}, t \right] \left\langle \bs{\xi} \right\rangle $ and $\barbelow{\xi}_{\phi}' = \barbelow{\xi}_\phi \left[ \bs{X}', t \right] \left\langle - \bs{\xi} \right\rangle$ in Eq. \ref{eq:micro_force_balance_phase_field_PD_2}. Now applying the change of variable $\bs{X} \leftrightarrow \bs{X}^{'}$, the first integral on the right hand side of Eq. \ref{eq:micro_force_balance_phase_field_PD_2} may be written as:

\begin{equation}
\int_{\mathcal{B}_r} \int_{\mathcal{B}_r} \left( \barbelow{\xi}_\phi - \barbelow{\xi}_{\phi}' \right) dV_{\bs{X}'} dV_{\bs{X}} = \int_{\mathcal{B}_r} \int_{ \mathcal{B}_r} \left(\barbelow{\xi}_{\phi}' - \barbelow{\xi}_\phi \right) dV_{\bs{X}} dV_{\bs{X}^{'}} = - \int_{\mathcal{B}_r} \int_{ \mathcal{B}_r} \left(\barbelow{\xi}_\phi - \barbelow{\xi}_{\phi}' \right) dV_{\bs{X}'} dV_{\bs{X}} = 0
\label{eq:micro_force_balance_phase_field_PD_3}
\end{equation}
Thus, Eq. \ref{eq:micro_force_balance_phase_field_PD_2} simplifies as, $\int_{\mathcal{B}_r} \zeta_\phi \ddot{\phi}\left(\bs{X},t\right) dV_{\bs{X}} + \int_{\mathcal{B}_r} {\pi}_\phi \left(\bs{X},t\right) dV_{\bs{X}}$ = 0, which implies that the micro-force balance given in Eq. \ref{eq:micro_force_balance_damage_PD} is satisfied globally. Similarly one may show that the micro-force balance given in Eq. \ref{eq:micro_force_balance_plasticity_PD} is also globally satisfied. 	
\end{proof}

%=================================================================================
\subsection{PD energy balance and constitutive model}

We present the PD energy balance equation and describe the constitutive correspondence with the classical material model \citep{silling2010peridynamic}. First, we discuss the non-local PD energy balance equations and derive the internal energy evolution equation in the PD setup.

\subsubsection{PD energy balance}

The external power may be written using the non-local PD states as follows:

\begin{align}
\nonumber
\mathcal{P}^{ext} = \int_{\mathcal{P}_t} \int_{\mathcal{B}_r\backslash \mathcal{P}_t} \left( \barbelow{\tb{T}} \cdot \dot{\tb{y}}' - \barbelow{\tb{T}}' \cdot \dot{\tb{y}}\right) dV_{\bs{X}'} dV_{\bs{X}} + \int_{\mathcal{P}_t} \int_{\mathcal{B}_r\backslash \mathcal{P}_t} \left(\barbelow{\xi}_\gamma \dot{\gamma^p}' - \barbelow{\xi}_\gamma' \dot{\gamma}^p\right) dV_{\bs{X}'} dV_{\bs{X}} \\ 
 + \int_{\mathcal{P}_t} \int_{\mathcal{B}_r\backslash  \mathcal{P}_t} \left(\barbelow{\xi}_\phi {\dot{\phi}}' - \barbelow{\xi}_\phi' \dot{\phi} \right) dV_{\bs{X}'} dV_{\bs{X}} + \int_{\mathcal{P}_t} \tb{b}_0 \left(\bs{X},t \right) \cdot \dot{\tb{y}} dV_{\bs{X}} 
\label{eq:external_power_PD}
\end{align}
The rate of heat supply can be expressed as:

\begin{equation}
\mathcal{R} =  \int_{\mathcal{P}_t} \rho h\, dV_{\bs{X}} - \int_{\mathcal{P}_t} \int_{\mathcal{B}_r\backslash \mathcal{P}_t} \left(\barbelow{q} - \barbelow{q}' \right) dV_{\bs{X}'} dV_{\bs{X}} 
\label{eq:thermal_power_PD}
\end{equation}
Using Eqs. \ref{eqD: dK/dt}, \ref{eqD:du/dt} and Eqs. \ref{eq:external_power_PD}, \ref{eq:thermal_power_PD} in Eq. \ref{eqD:First law in rate form}, one may write the PD energy balance equation in the rate form as follows:

\begin{align}
\nonumber
\int_{\mathcal{P}_t} \rho \dot{e} dV_{\bs{X}} +  \int_{\mathcal{P}_t}  \rho \, \ddot{\tb{y}} \cdot \dot{\tb{y}} dV_{\bs{X}} =  \int_{\mathcal{P}_t} \int_{\mathcal{B}_r\backslash \mathcal{P}_t} \left( \barbelow{\tb{T}} \cdot \dot{\tb{y}}' - \barbelow{\tb{T}}' \cdot \dot{\tb{y}}\right) dV_{\bs{X}'} dV_{\bs{X}} + \int_{\mathcal{P}_t} \tb{b}_0 \left(\bs{X},t \right)\cdot\dot{\tb{y}}dV_{\bs{X}}\\ \nonumber
 + \int_{\mathcal{P}_t} \int_{\mathcal{B}_r\backslash  \mathcal{P}_t} \left(\barbelow{\xi}_\phi {\dot{\phi}}' - \barbelow{\xi}_\phi' \dot{\phi} \right) dV_{\bs{X}'} dV_{\bs{X}} + \int_{\mathcal{P}_t} \int_{\mathcal{B}_r\backslash \mathcal{P}_t} \left(\barbelow{\xi}_\gamma \dot{\gamma^p}' - \barbelow{\xi}_\gamma' \dot{\gamma}^p\right) dV_{\bs{X}'} dV_{\bs{X}}  \\
 + \int_{\mathcal{P}_t} \rho h\, dV_{\bs{X}} - \int_{\mathcal{P}_t} \int_{\mathcal{B}_r\backslash \mathcal{P}_t} \left(\barbelow{q} - \barbelow{q}' \right) dV_{\bs{X}'} dV_{\bs{X}} 
\label{eq:internal_energy_PD1}
\end{align}
The following identities hold from the anti-symmetry of the integrand:

\begin{equation}
\int_{\mathcal{P}_t} \int_{\mathcal{P}_t} \left(\barbelow{\tb{T}} \cdot \dot{\tb{y}}' - \barbelow{\tb{T}}' \cdot \dot{\tb{y}} \right) dV_{\bs{X}'}dV_{\bs{X}} = 0 
\label{eq:identity_1}
\end{equation}
\begin{equation}
\int_{\mathcal{P}_t}\int_{\mathcal{P}_t}\left(\barbelow{\xi}_\phi {\dot{\phi}}'-\barbelow{\xi}_\phi' \dot{\phi}\right) dV_{\bs{X}'}dV_{\bs{X}}=0 
\label{eq:identity_2}
\end{equation}

\begin{equation}
\int_{\mathcal{P}_t} \int_{\mathcal{P}_t} \left( \barbelow{\xi}_\gamma \dot{\gamma^p}' - \barbelow{\xi}_\gamma' \dot{\gamma}^p \right) dV_{\bs{X}'}dV_{\bs{X}} = 0 
\label{eq:identity_3}
\end{equation}
\begin{equation}
\int_{\mathcal{P}_t}\int_{\mathcal{P}_t}\left(\barbelow{q}-\barbelow{q}' \right) dV_{\bs{X}'}dV_{\bs{X}} = 0
\label{eq:identity_4}
\end{equation}

Using the identities given in Eqs. \ref{eq:identity_1}-\ref{eq:identity_4}, we may recast Eq. \ref{eq:internal_energy_PD1} as:

\begin{align}
\nonumber
\int_{\mathcal{P}_t} \rho \dot{e} dV_{\bs{X}} +  \int_{\mathcal{P}_t}  \rho \, \ddot{\tb{y}} \cdot \dot{\tb{y}} dV_{\bs{X}} =  \int_{\mathcal{P}_t} \int_{\mathcal{B}_r} \left( \barbelow{\tb{T}} \cdot \dot{\tb{y}}' - \barbelow{\tb{T}}' \cdot \dot{\tb{y}}\right) dV_{\bs{X}'} dV_{\bs{X}} + \int_{\mathcal{P}_t} \tb{b}_0 \left(\bs{X},t \right)\cdot\dot{\tb{y}} dV_{\bs{X}}  \\ \nonumber
 + \int_{\mathcal{P}_t} \int_{\mathcal{B}_r} \left(\barbelow{\xi}_\phi {\dot{\phi}}' - \barbelow{\xi}_\phi' \dot{\phi} \right) dV_{\bs{X}'} dV_{\bs{X}} + \int_{\mathcal{P}_t} \int_{\mathcal{B}_r} \left(\barbelow{\xi}_\gamma \dot{\gamma^p}' - \barbelow{\xi}_\gamma' \dot{\gamma}^p\right) dV_{\bs{X}'} dV_{\bs{X}}  \\
 + \int_{\mathcal{P}_t} \rho h dV_{\bs{X}} - \int_{\mathcal{P}_t} \int_{\mathcal{B}_r} \left(\barbelow{q} - \barbelow{q}' \right) dV_{\bs{X}'} dV_{\bs{X}} 
\label{eq:internal_energy_PD2}
\end{align}

Note the following identities,

\begin{equation}
\left( \barbelow{\tb{T}} - \barbelow{\tb{T}}' \right) \cdot \dot{\tb{y}} =  \barbelow{\tb{T}} \cdot \left( \dot{\tb{y}} - \dot{\tb{y}}' \right) + \left( \barbelow{\tb{T}} \cdot \dot{\tb{y}}' - \barbelow{\tb{T}}' \cdot \dot{\tb{y}}\right) 
\label{eq:force_state_identity}
\end{equation}
\begin{equation}
\left( \barbelow{\xi}_\phi - \barbelow{\xi}_\phi' \right) \dot{\phi} =  \barbelow{\xi}_\phi \left( \dot{\phi} - \dot{\phi}' \right) + \left( \barbelow{\xi}_\phi {\dot{\phi}}' - \barbelow{\xi}_\phi' \dot{\phi} \right) 
\label{eq:damage_field_identity}
\end{equation}
\begin{equation}
\left( \barbelow{\xi}_\gamma - \barbelow{\xi}_\gamma' \right) \dot{\gamma}^p =  \barbelow{\xi}_\gamma \left( \dot{\gamma}^p - \dot{\gamma^p}' \right) + \left( \barbelow{\xi}_\gamma {\dot{\gamma^p}}' - \barbelow{\xi}_\gamma' \dot{\gamma}^p \right) 
\label{eq:plasticity_identity}
\end{equation}
Using the identities given in Eqs. \ref{eq:force_state_identity}-\ref{eq:plasticity_identity} and substituting the PD balance laws in Eq. \ref{eq:internal_energy_PD2}, the localized form of the internal energy density evolution equation may be expressed as:

\begin{align}
\nonumber
\rho \dot{e}  = \int_{\mathcal{B}_r} \barbelow{\tb{T}} \cdot \left( \dot{\tb{y}}' - \dot{\tb{y}} \right) dV_{\bs{X}'} + \int_{\mathcal{B}_r} \barbelow{\xi}_\gamma  \left( \dot{\gamma^p}' - \dot{\gamma}^p \right) dV_{\bs{X}'} + \int_{\mathcal{B}_r} \barbelow{\xi}_\phi \left( \dot{\phi}' - \dot{\phi} \right) dV_{\bs{X}'} \\  
+\left( \zeta_\gamma  \, \ddot{\gamma}^p + \bar{\pi}_\gamma \right) \dot{\gamma}^p   +  \left( \zeta_\phi \, \ddot{\phi} + {\pi}_\phi \right)\dot{\phi}  + \rho h  -  \int_{\mathcal{B}_r} \left(\barbelow{q} - \barbelow{q}' \right) dV_{\bs{X}'}
\label{eq:internal_energy_rate_local}
\end{align}

%===========================================================================================
\subsubsection{Constitutive correspondence}

First, we rewrite the evolution equation for internal energy density derived in Eq. \ref{eqD:localized form of internal energy} by replacing the classical gradients with their corresponding non-local PD approximations as follows:

\begin{align}
\nonumber
\rho \dot{e} = \overline{\tb{T}}: \dot{\overline{\tb{F}}} + \overline{\bs{\xi}}_\gamma \cdot \dot{\overline{\tb{G}}}_{{\gamma}} + \overline{\bs{\xi}}_\phi \cdot \dot{\overline{\tb{G}}}_\phi + \left( \zeta_\gamma  \, \ddot{\gamma}^p +\bar{\pi}_\gamma  \right) \dot{\gamma}^p + \left( \zeta_\phi \, \ddot{\phi}  + {\pi}_\phi \right) \dot{\phi}  + \rho h  \\
-\int_{\mathcal{B}_r} \left(\barbelow{q} - \barbelow{q}' \right) dV_{\bs{X}'}
\label{eq:nonlocal_internal_power}
\end{align}
where, $\overline{\tb{T}}, \overline{\bs{\xi}}_\gamma$ and $\overline{\bs{\xi}}_\phi$ are the non-local approximations of $\tb{T}, \bs{\hat{\xi}}_\gamma$ and $\bs{\hat{\xi}}_\phi$ respectively. Substituting the non-local gradients given in Eqs. \ref{eq:nonlocal_deformation_gradient}-\ref{eq:nonlocal_equivalent_plastic_strain_gradient} in Eq. \ref{eq:nonlocal_internal_power}, we get:

\begin{align}
\nonumber
\rho \dot{e} = \overline{\tb{T}}:  \left[ \int_{\mathcal{H}} \omega \left( \barbelow{\dot{\tb{Y}}} \left\langle \bs{\xi} \right\rangle \otimes  \bs{\xi} \right) dV_{\bs{X}'} \right] \overline{\tb{K}}^{-1} + \overline{\bs{\xi}}_\gamma \cdot \left[ \int_{\mathcal{H}} \omega \barbelow{\dot{\Gamma}}^p \left\langle \bs{\xi} \right\rangle \bs{\xi}  dV_{\tb{X}'} \right] \overline{\tb{K}}^{-1} -\int_{\mathcal{B}_r} \left(\barbelow{q} - \barbelow{q}' \right) dV_{\bs{X}'} \\ 
 + \overline{\bs{\xi}}_\phi \cdot \left[ \int_{\mathcal{H}} \omega \barbelow{\dot{\Phi}} \left\langle \bs{\xi} \right\rangle \bs{\xi}  dV_{\tb{X}'} \right] \overline{\tb{K}}^{-1} + \left( \zeta_\gamma  \, \ddot{\gamma}^p +\bar{\pi}_\gamma  \right) \dot{\gamma}_p + \left( \zeta_\phi \, \ddot{\phi}  + {\pi}_\phi \right) \dot{\phi}  + \rho h  
\label{eq:nonlocal_internal_power_1}
\end{align}

Note that, since $\overline{\tb{T}}, \overline{\bs{\xi}}_\gamma$ and $\overline{\bs{\xi}}_\phi$ are functions of $\bs{X}$ only, we may recast Eq. \ref{eq:nonlocal_internal_power_1} as:

\begin{align}
\nonumber
\rho \dot{e} = \int_{\mathcal{H}} \left( \omega \overline{\tb{T}} \overline{\tb{K}}^{-1} \bs{\xi} \right) \cdot  \barbelow{\dot{\tb{Y}}}  \left\langle \bs{\xi} \right\rangle  dV_{\tb{X}'} + \int_{\mathcal{H}}  \left( \omega \overline{\bs{\xi}}_\gamma \cdot \left( \bs{\xi} \overline{\tb{K}}^{-1} \right) \right) \cdot  \barbelow{\dot{\Gamma}}_p \left\langle \bs{\xi} \right\rangle dV_{\tb{X}'} -\int_{\mathcal{B}_r} \left(\barbelow{q} - \barbelow{q}' \right) dV_{\bs{X}'}  \\ 
+ \int_{\mathcal{H}}  \left( \omega \overline{\bs{\xi}}_\phi \cdot \left( \bs{\xi} \overline{\tb{K}}^{-1} \right) \right) \cdot  \barbelow{\dot{\Phi}}  \left\langle \bs{\xi} \right\rangle dV_{\bs{X}'} + \left( \zeta_\gamma  \, \ddot{\gamma}^p +\bar{\pi}_\gamma  \right) \dot{\gamma}^p + \left( \zeta_\phi \, \ddot{\phi}  +  {\pi}_\phi \right) \dot{\phi}  + \rho h  
\label{eq:nonlocal_internal_power_2}
\end{align}
Comparing Eqs. \ref{eq:internal_energy_rate_local} and \ref{eq:nonlocal_internal_power_2}, we get:

\begin{equation}\groupequation{
\begin{split}
&\barbelow{\tb{T}} = \omega \overline{\tb{T}} \overline{\tb{K}}^{-1} \bs{\xi}\\
&\barbelow{\xi}_\phi = \omega \overline{\bs{\xi}}_\phi \cdot \left( \bs{\xi} \overline{\tb{K}}^{-1}\right)\\
&\barbelow{\xi}_\gamma = \omega \overline{\bs{\xi}}_\gamma \cdot \left( \bs{\xi} \overline{\tb{K}}^{-1} \right)
\end{split}
}\end{equation}

Similarly, based on entropy equivalence and using Eq. \ref{eqD: Clausius-Duhem inequality}, we get:

\begin{equation}
\barbelow{q} = \omega \frac{\theta^\prime}{\theta} \overline{\bs{q}} \cdot \left(\bs{\xi} \overline{\tb{K}}^{-1}\right)
\end{equation}
where, $\overline{\bs{q}}$ represents the non-local approximation of $\bs{q}$. We adopt a rational bond breaking criterion (see \cite{roy2017peridynamics}) in the PD formulation. Following \cite{tupek2013approach}, we split influence function $\omega\left(|\bs{\xi}|\right)$ as follows:

\begin{equation}
\omega\left(|\bs{\xi}|, \phi, \phi^\prime \right) = \overline{\omega}\left(|\bs{\xi}|\right) \hat{\omega} \left( \phi, \phi^\prime \right)
\end{equation}
where, $ \overline{\omega}\left(|\bs{\xi}|\right)$ represents the influence function in the undamaged material and we define $\hat{\omega} \left( \phi, \phi^\prime \right)$ as:

\begin{align}
\hat{\omega} \left( \phi, \phi^\prime \right) =  \left\{ \begin{array}{cc} 
                0 &\text{for}\hspace{1mm} \frac{\phi+\phi^\prime}{2} = 0 \hspace{3mm}\text{and} \hspace{3mm} \lambda > 0 \\
                1 &  \text{otherwise} \\
                \end{array} \right.
\end{align}
The bond stretch $(\lambda)$ is defined as:
\begin{equation}
\lambda = \frac{|\barbelow{\tb{Y}}\left\langle\bs{\xi}\right\rangle|-|\bs{\xi}|}{|\bs{\xi}|}
\end{equation}

% %----------------------------------------------%

%----------------------------------------------%
{\renewcommand{\arraystretch}{1.0}
\begin{table}[th]
\centering
\caption{Material Parameters for OFHC copper and titanium (TA15) alloy.} 
\begin{tabular}{|l c c |c c c|}
 \hline \hline			
Parameters & OHFC & TA15  &  Parameters &   OHFC  & TA15 \\
\hline	
 $\mu$ \; (GPa) & 46.16 &  48.82      & $\nu$   & 0.3      &  0.28 \\
 $\rho$ \; (kg/m$^3$)  & 8960 &  4560       & n     & 0.759      &  0.80 \\
 $S_0$ \; (MPa)    & 35    & 50 & m     & 0.524       &  0.553\\ 
 $H_0$ \; (MPa)   & 580     & 150  & $\alpha_0$ & -  &  22.295\\
 $K_{Ic}$ (MPa $\sqrt{\text{m}})$ & 50e6  &  $60e6$  & $\alpha_1$   & -  & -0.0461  \\
 $H_s$ \; (MPa)   & 100     & 60  & $\alpha_2$  & -    &  $2.40e{-5}$\\ 
 $\dot{\gamma}_0$  $(s^{-1})$ & 1 & 10  & $\bar{k}_t$ $(\text{J/m})$ & 25e3   & $5.84e{3}$ \\
  $\mathcal{C}_v$ \; $(\text{J{kg}}^{-1}\text{K}^{-1})$ & 385 & 565  &    $\bar{k}_p$   & $1.73e{-9}$      &  $8.63e{-8}$ \\
 $\theta_{ref}$ \; (K) & 77 & 298    &     $l_0$ (m)  & $1e{-4}$       &  $1e{-4}$ \\
 $\theta_{melt}$ \; (K)  & 1350    & 1940  &  r & 0.22 & -    \\
  \hline  
\end{tabular}
\label{table:Material Parameters OFHC_TA15}
\end{table}}
%-------------------------------------------------% 

%============================Numerical Results====================================

\section{Illustrative numerical results}
\label{vd:numericalR}

We may now evaluate the predictive performance of the proposed model using benchmark problems that are of general interest. First, we demonstrate the response of a ductile material specimen in a uniaxial simulation and validate it against experimental data. Following this, we showcase simulation results for an asymmetrically notched specimen under tension and the cup-cone fracture in a dog-bone shaped round bar. We assume that adiabatic conditions prevail, so the divergence term (heat fluxes) in the temperature evolution equation vanishes. \\

%------------------------------------------------------------------------------------

\subsection{OFHC copper}

First, we present the stress-strain plot for the viscoplastic deformation and its validation against the experimental data reported by \cite{nemat1998flow}. Followed by this, a loading-unloading plot showing ductile damage is presented. Material parameters for OFHC copper used in the numerical simulation are presented in Table \ref{table:Material Parameters OFHC_TA15}. Fig. \ref{fig:OFHC_4000} shows the viscoplastic response of OFHC copper for different initial thermodynamic temperatures and a strain rate of 4000/s. Whereas the stress-strain plot for different strain rates at room temperature is plotted in Fig. \ref{fig:OFHC_296K}. The stress-strain plots show good agreement with the experimental data. We report the ductile damage response in Fig. \ref{fig:Uni_Stress}. Arrow marks on the stress-strain plot indicate the loading-unloading and reloading paths. It is worth noting that when unloading is performed, unlike brittle damage, residual strain is present as expected; this is a typical feature of ductile damage. Fig. \ref{fig:Uni_phi} shows the plot of the gauge or damage field ($\phi$) vs. strain. Gauge field ($\phi$) values of 1 and 0  respectively represent undamaged and completely damaged states. Values between 1 and 0 represent partial damage states.

%----------------------------------------------%
\begin{figure}

\begin{subfigure}{0.45\textwidth}
\centering
\includegraphics[width=0.95\linewidth]{./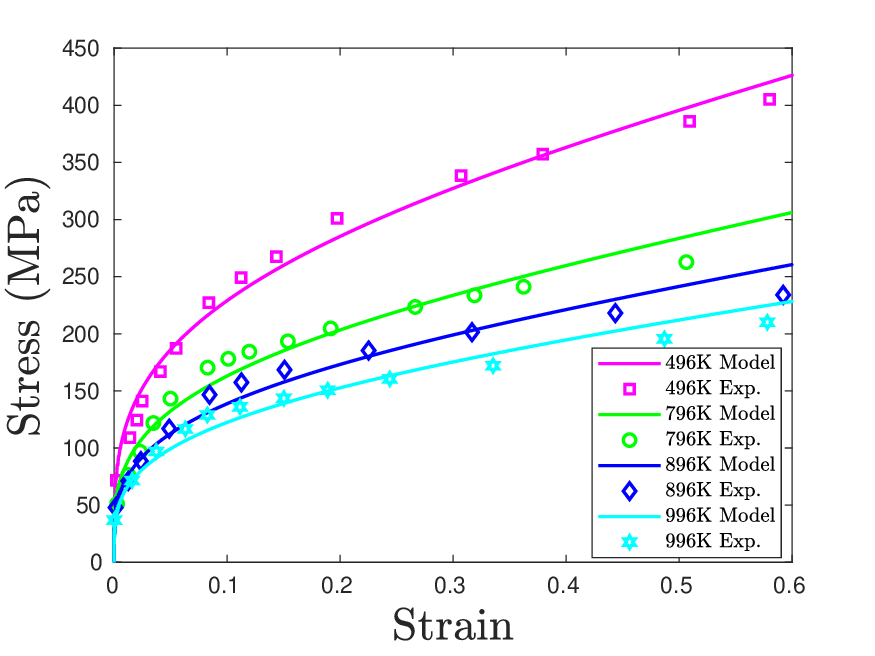}
\caption{}
\label{fig:OFHC_4000}
\end{subfigure}
\hfill
\begin{subfigure}{0.45\textwidth}
\centering
\includegraphics[width=0.95\linewidth]{./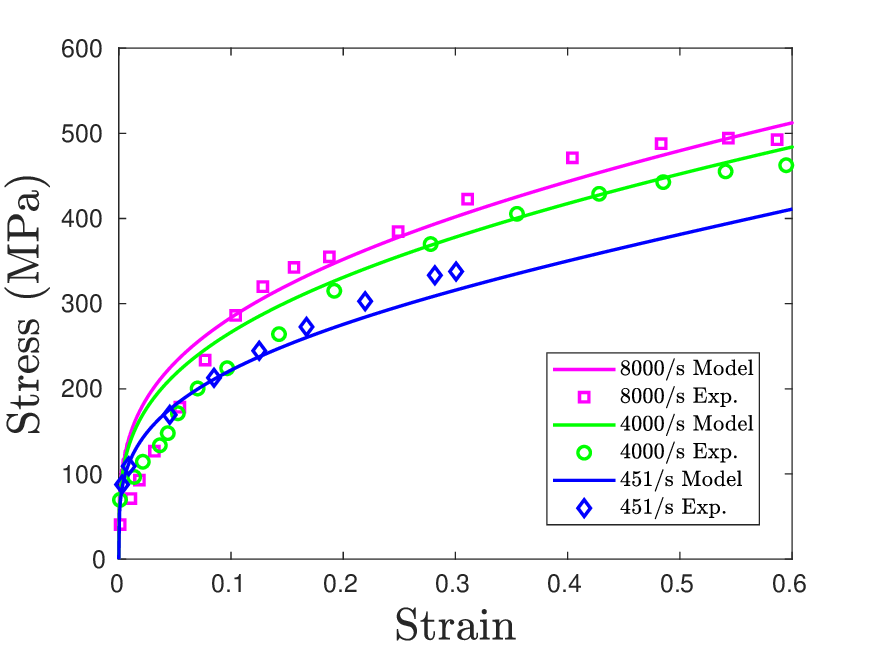}
\caption{}
\label{fig:OFHC_296K}
\end{subfigure}

\begin{subfigure}{0.45\textwidth}
\centering
\includegraphics[width=0.95\linewidth]{./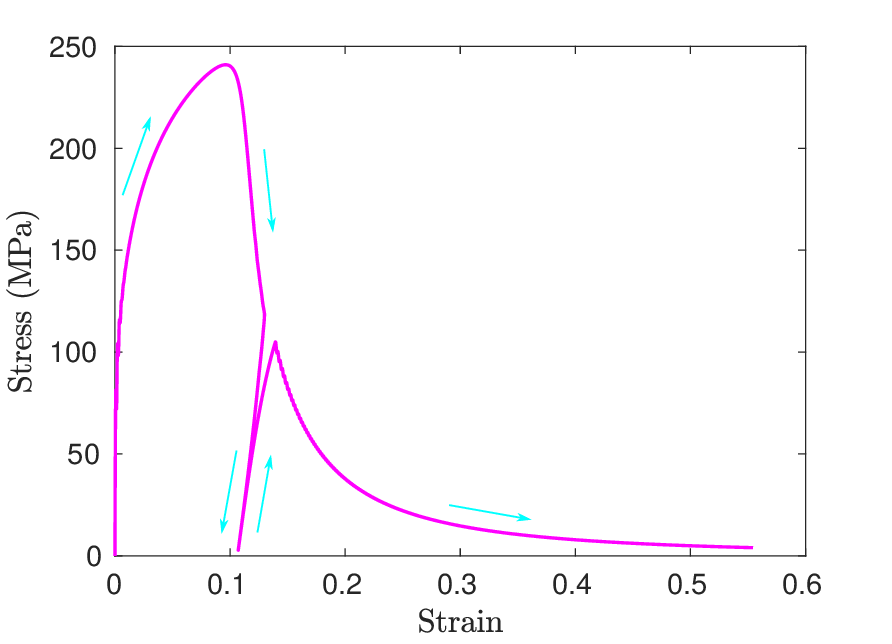}
\caption{}
\label{fig:Uni_Stress}
\end{subfigure}
\hfill
\begin{subfigure}{0.45\textwidth}
\centering
\includegraphics[width=0.95\linewidth]{./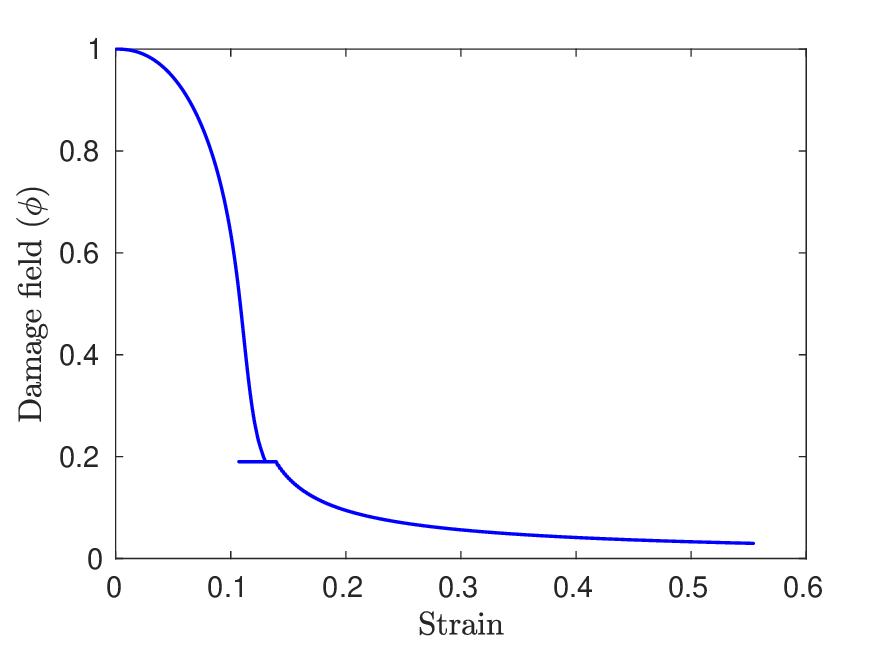}
\caption{}
\label{fig:Uni_phi}
\end{subfigure}
\caption{Model prediction of typical viscoplastic ductile damage response of OFHC copper. Figure (a) and (b) show stress-strain plot at $\phi = 1$ for strain rate = 4000/s and an initial temperature of 296K, respectively (c) Loading-unloading plot for strain rate = 8000/s  (d) damage field (gauge filed, $\phi$) vs. strain plot. Arrow marks in figure (c) indicate loading-unloading and reloading path.}
\label{fig:Uniaxial_Stress_phi_plot}
\end{figure}
%--------------------------------------

\subsection{Titanium alloy (TA15)}

We present stress-strain response of titanium alloy (TA15) under a uniaxial tension test and the validation of the simulated results against the experimental data reported by  \cite{yang2015behavior}. The material parameters used in numerical simulations of TA15 are given in Table \ref{table:Material Parameters OFHC_TA15}. The numerical results for TA15 alloy at different strain rates and temperatures are shown in Fig. \ref{fig:Uniaxial_TA15}. The numerical results show good agreement with the experimental data. Stress-strain plots for strain rates 0.1/s, 0.01/s and 0.001/s with an initial temperature at 1023.15K are shown in Fig. \ref{fig:TA15_1023K}. Figs. \ref{fig:TA15_0.01} and \ref{fig:TA15_0.001} show the stress-strain response of TA15 alloy for different initial temperatures at strain rates of 0.01/s and 0.001/s, respectively. The proposed model captures combined softening due to temperature and ductile damage.

%----------------------------------------------------------%
\begin{figure}
\begin{subfigure}{0.95\textwidth}
\centering
\includegraphics[width=0.5\linewidth]{./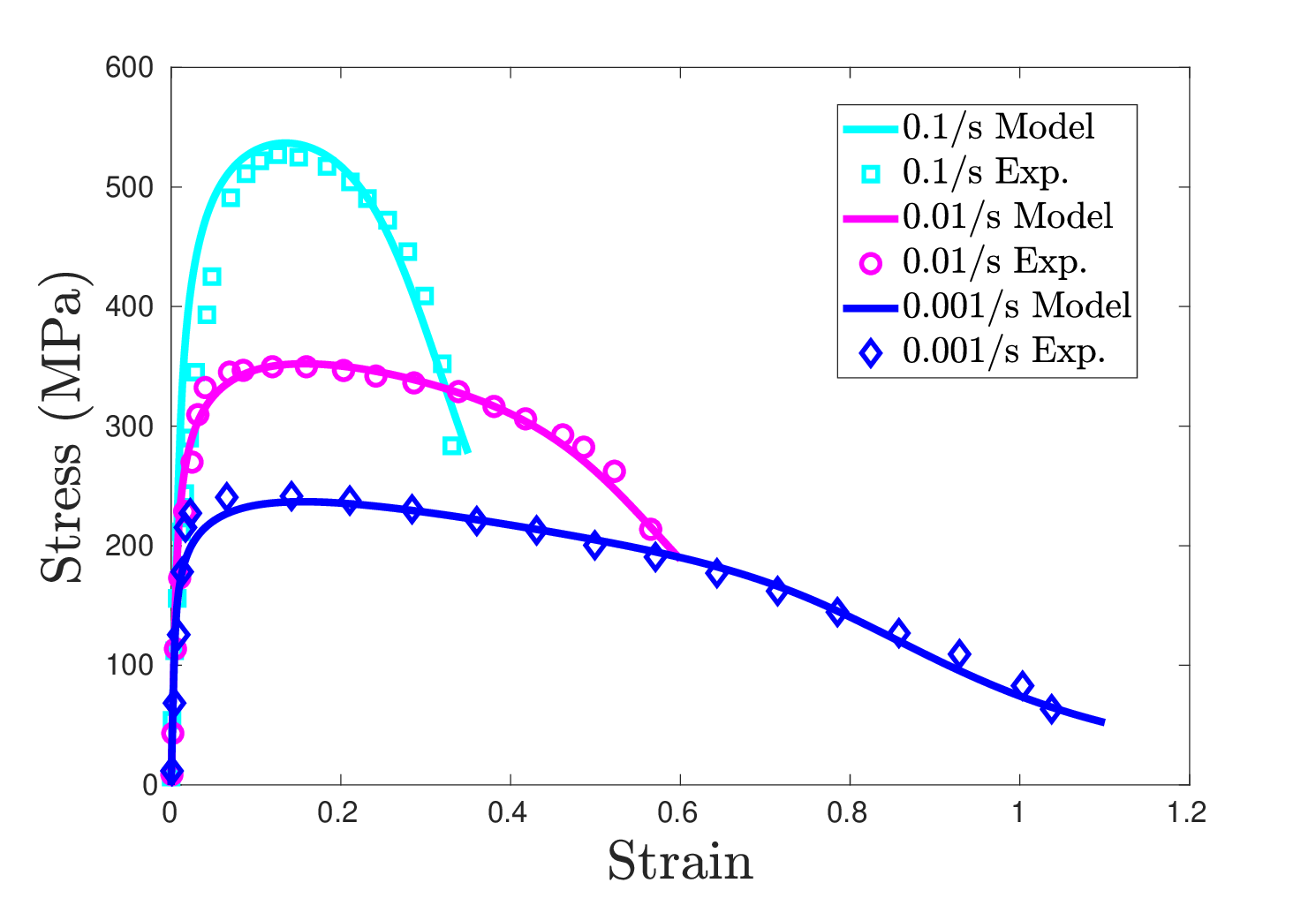}
\caption{}
\label{fig:TA15_1023K}
\end{subfigure}

\begin{subfigure}{0.4\textwidth}
\centering
\includegraphics[width=1\linewidth]{./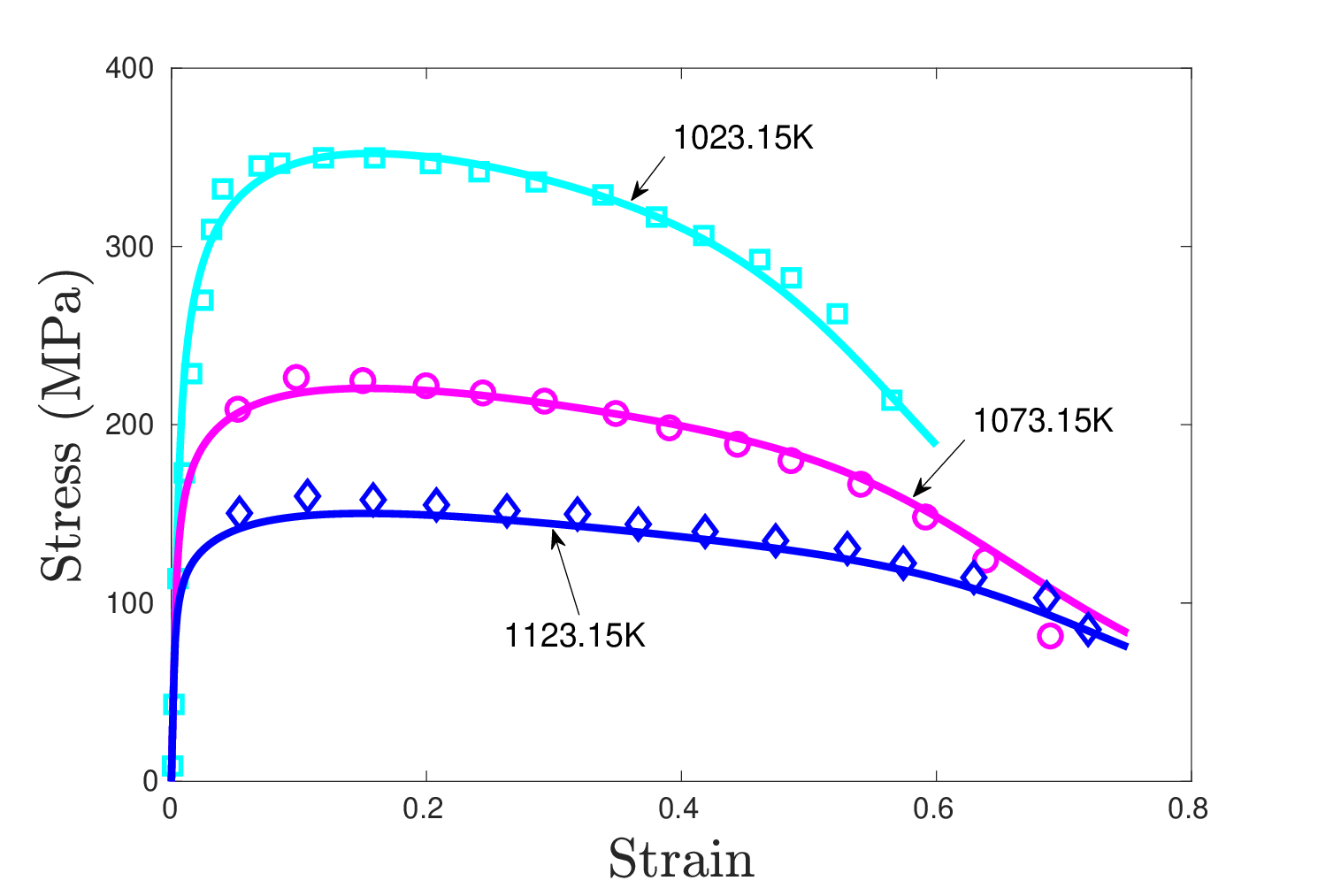}
\caption{}
\label{fig:TA15_0.01}
\end{subfigure}
\hfill
\begin{subfigure}{0.4\textwidth}
\centering
\includegraphics[width=1\linewidth]{./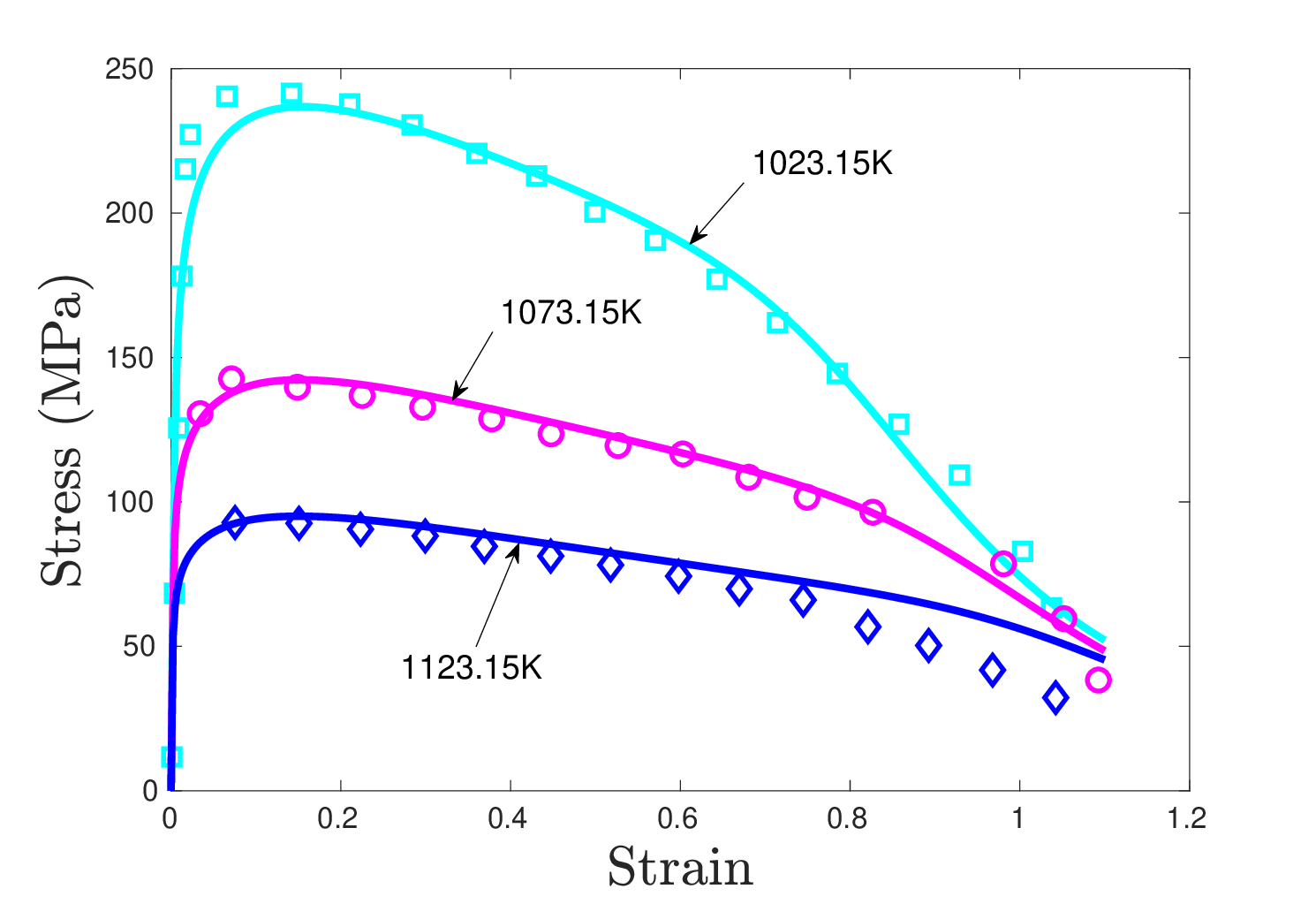}
\caption{}
\label{fig:TA15_0.001}
\end{subfigure}

\caption{Model predictions of stress-strain relationship and validation against experimental data (symbols) for titanium alloy (TA15) at different strain rates and temperatures; (a) initial temperature = 1023.15 K,  (b) strain rate = 0.01/s, (c) strain rate = 0.001/s.}
\label{fig:Uniaxial_TA15}
\end{figure}
%-------------------------------------------------------------

%----------------------------------------------%
{\renewcommand{\arraystretch}{1.0}
\begin{table}[th]
\centering
\caption{Material constants for Fe370 mild steel.} 
\begin{tabular}{l c c c}
 \hline \hline			
Parameters & Values  &  Parameters &  Values\\
\hline	
 $\mu$ \; (GPa)  &  80.76     & $\nu$         &  0.3 \\
 $\rho$ \; (kg/m$^3$)  &  7870      & n           &  0.527 \\
 $S_0$ \; (MPa)        & 260 & m           &  0.567\\ 
 $H_0$ \; (MPa)        & 550  & $r$  &  0.12\\ 
 $H_s$ \; (MPa)        & 300 & $K_{Ic}$ (MPa $\sqrt{\text{m}})$  &  $75e6$\\ 
 $k_\gamma$ \; (kg/m)  & 280e6  & $\bar{k}_t$ $(\text{J/m})$ & $17e{3}$ \\
  $\mathcal{C}_v$ \; $(\text{J{kg}}^{-1}\text{K}^{-1})$& 460  &    $\bar{k}_p$         &  $5.62e{-9}$ \\
 $\theta_{ref}$ \; (K) & 100   &     $l_0$ (m)        &  $1e{-5}$ \\
 $\theta_{melt}$ \; (K)& 1750   &   $\dot{\gamma}_0$  $(s^{-1})$ & 10   \\
  \hline  
\end{tabular}
\label{table:Material_Parameters_Fe370}
\end{table}}
%-------------------------------------------------%

\subsection{Strain rate locking effect (Fe370)}

We now demonstrate strain rate locking in the ductile damage response of Fe370 mild steel observed at very high strain rates. Model predictions are validated against the experimental data reported by  \cite{mirone2019locking}. In the split Hopkinson tension bar (SHTB) test, as the strain rate in the necking region increases manifold, the material response becomes insensitive to further variations in the strain rate. The material parameters used in the numerical simulations are given in Table \ref{table:Material_Parameters_Fe370}. We consider the following exponential form of the temporal function  $\mathcal{Z}$ (see Eq. \ref{eqD:specialization of temporal function}), $\mathcal{Z} = k_{\rho} + k_{\gamma}  e^{ f\left(\mathcal{J}\right)}$ to model strain rate locking effect, where $\mathcal{J} = \text{log}_{10}(\Delta \dot{J}_0)$ and $\Delta \dot{J}_0$ is the rate of change in volume during $t$ and $t +\Delta t$.  $\zeta_\gamma $ is given as (see Eq. \ref{eqD:micro-inertia}),

\begin{equation}
\zeta_\gamma = 2k_{\rho} + 2 {k}_\gamma  e^{ f\left(\mathcal{J}\right)}
\label{eq:zeta_LockedD}
\end{equation}

Stress-strain response of Fe370 mild steel predicted from the present model is shown in Fig. \ref{fig:Fe370_mild_steel_D} which shows good agreement with the experimental observation reported in \cite{mirone2019locking}. Stress-strain response of Fe370 mild steel shows strain rate sensitivity in the low strain rate regime (see Fig. \ref{fig:strainRLD2}, \ref{fig:Fe370_strainRL_D2}). However, for sufficiently high strain rates, the response is mostly insensitive to further variations in the strain rate; (see  Fig. \ref{fig:strainRLD1}, \ref{fig:Fe370_strainRL_D1S1}). In the higher strain rate regime, the nonlinear micro-inertia (see, Eq. \ref{eq:zeta_LockedD}) due to defect motion becomes insensitive to strain rate variation and attains about the same value for different strain rates; see Fig. \ref{fig:Fe370_zeta_gammaP_strainR}. It is important to note that the micro-inertia term given in Eq. \ref{eq:zeta_LockedD} depends on the rate of change in volume through the function $f(\mathcal{J}) $, which may be very significant for a dilatational plasticity model.

%---------------------------------------------------------------------------------------
\begin{figure}
\begin{subfigure}{0.45\textwidth}
\centering
\includegraphics[width=0.95\linewidth]{./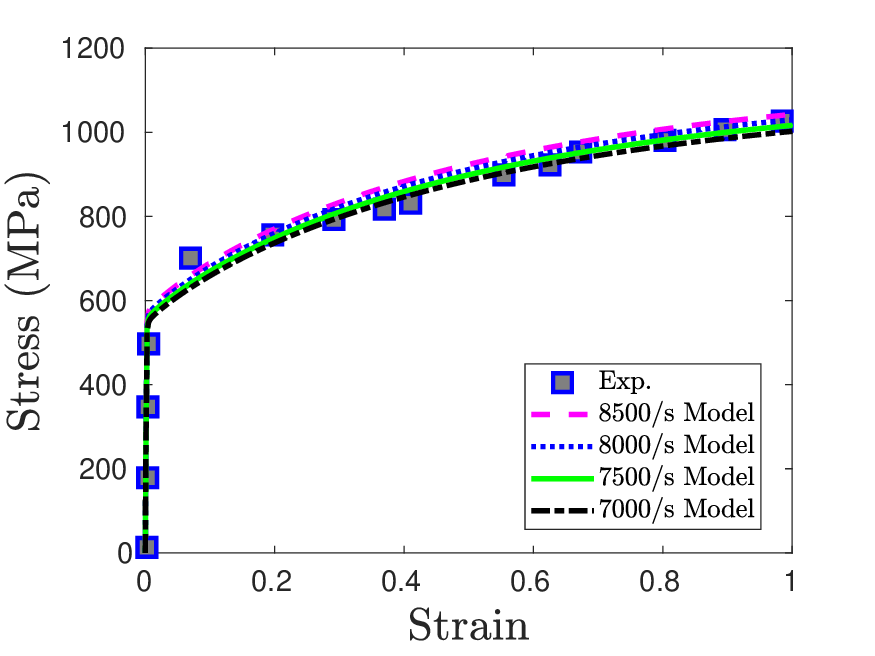}
\caption{}
\label{fig:strainRLD1}
\end{subfigure}
\hfill
\begin{subfigure}{0.45\textwidth}
\centering
\includegraphics[width=0.95\linewidth]{./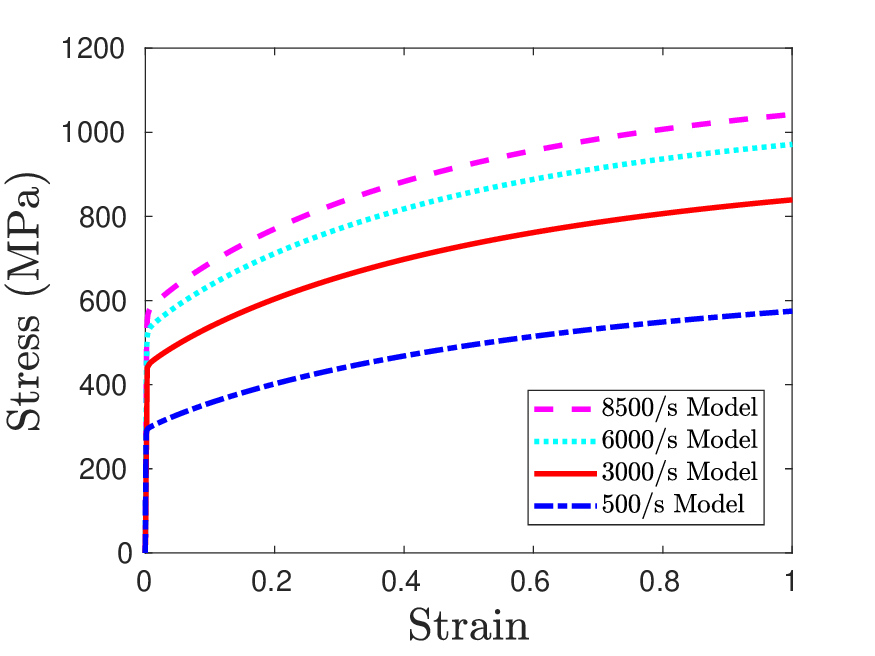}
\caption{}
\label{fig:strainRLD2}
\end{subfigure}

\vspace{1cm}
\begin{subfigure}{0.45\textwidth}
\centering
\includegraphics[width=0.95\linewidth]{./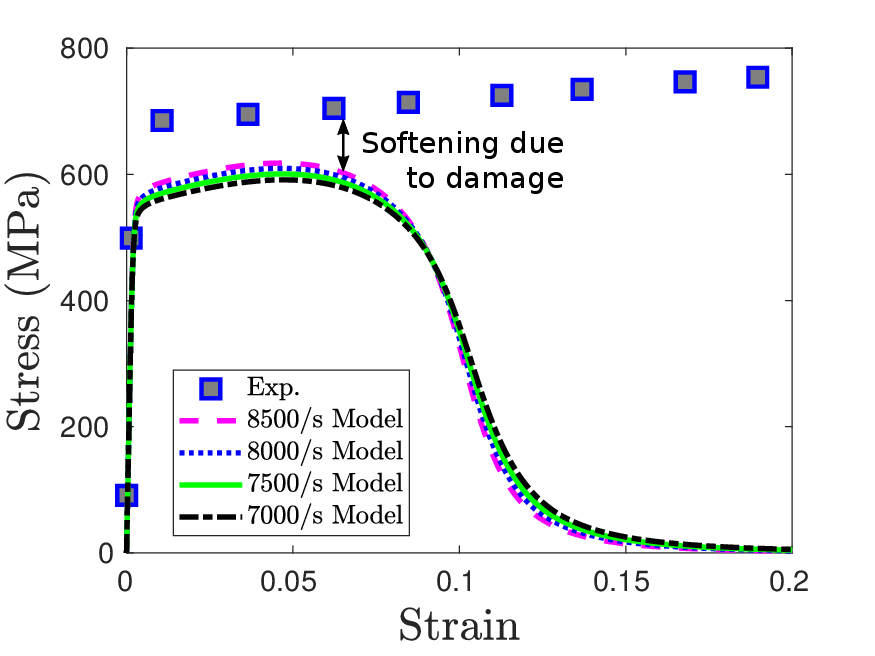}
\caption{}
\label{fig:Fe370_strainRL_D1S1}
\end{subfigure}
\hfill
\begin{subfigure}{0.45\textwidth}
\centering
\includegraphics[width=0.95\linewidth]{./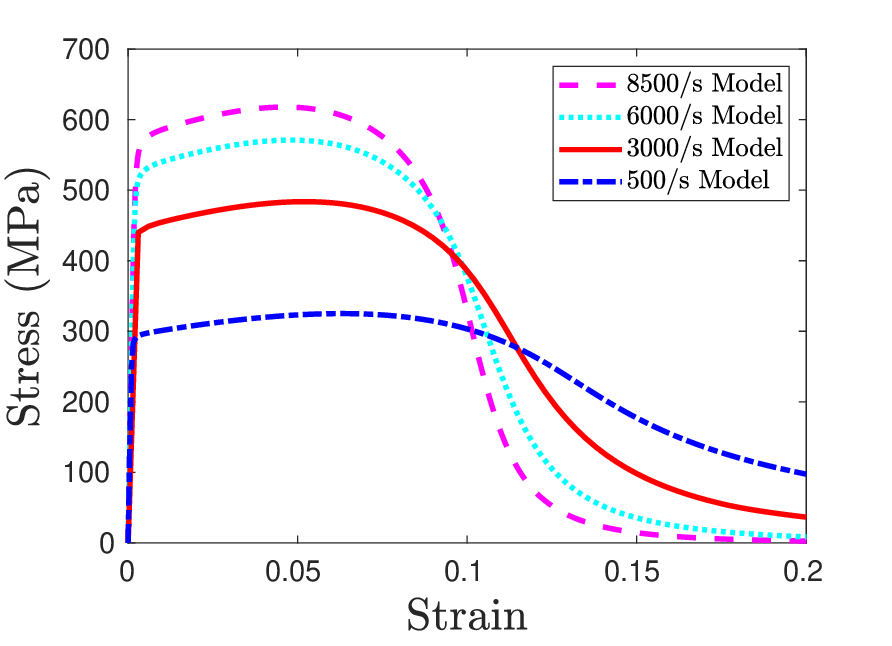}
\caption{}
\label{fig:Fe370_strainRL_D2}
\end{subfigure}

\begin{subfigure}{0.98\textwidth}
\centering
\includegraphics[width=0.45\linewidth]{./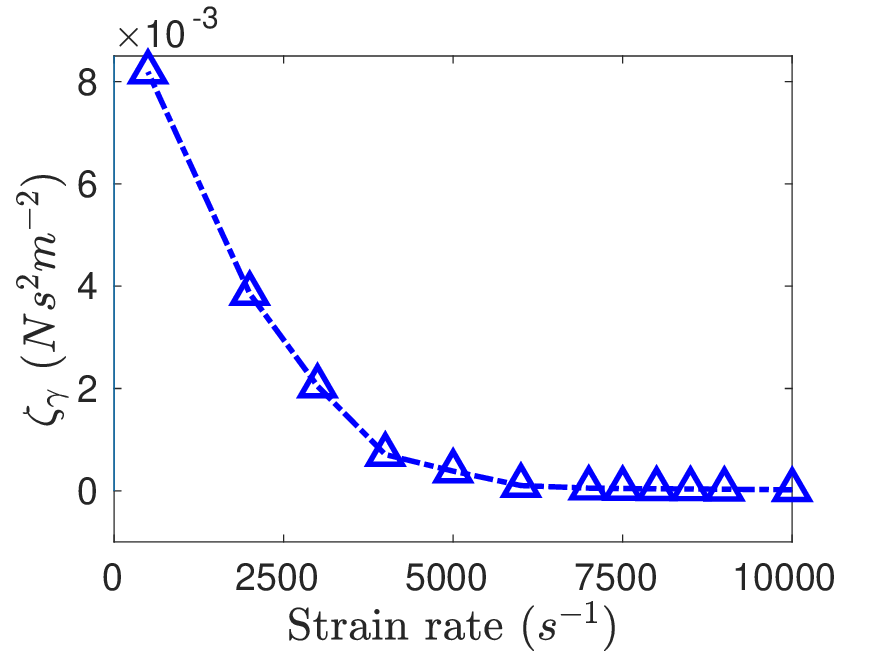}
\caption{}
\label{fig:Fe370_zeta_gammaP_strainR}
\end{subfigure}
\caption{Strain rate locking effect in the stress-strain curve of Fe370 mild steel at room
temperature. Response curve shows sensitivity in the lower strain rate regime; but is mostly
insensitive to further strain rate variations at higher strain rates. Figs. \ref{fig:strainRLD1} and \ref{fig:strainRLD2} represent stress-strain plots without accounting for damage while Figs. \ref{fig:Fe370_strainRL_D1S1} and \ref{fig:Fe370_strainRL_D2} show stress-strain plots with damage. $\zeta_\gamma$ denotes the coefficient of micro-inertia due to defect/dislocation motion.}
\label{fig:Fe370_mild_steel_D}
\end{figure}

%---------------------------------

%----------------------------------------------------------------------%
\begin{figure}[h]
\centering
\includegraphics[scale=.28]{./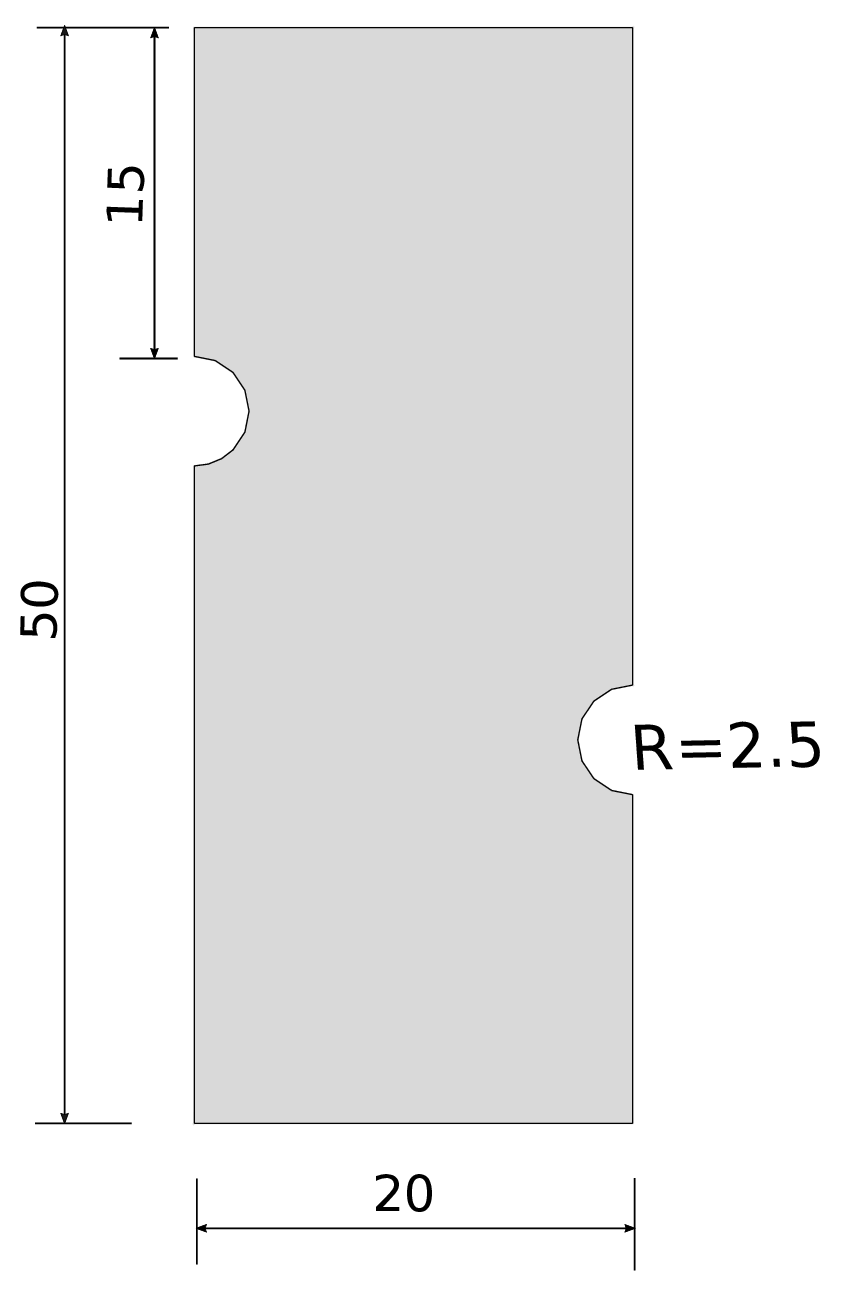}
\caption{Schematic of an asymmetrically notched plane strain specimen. Dimensions are in mm.}
\label{fig:schematic_asymmetrically_notch}
\end{figure}

%----------------------------------------------------------------------%
\subsection{Asymmetrically notched specimen under tension}

%-------------------------------------------------------------%
\begin{figure}
\begin{subfigure}{0.45\textwidth}
\centering
\includegraphics[width=0.90\linewidth]{./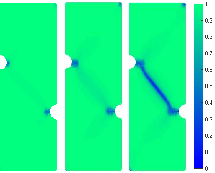}
\caption{}
\label{fig:Asy_phi}
\end{subfigure}
\hfill
\begin{subfigure}{0.45\textwidth}
\centering
\includegraphics[width=0.90\linewidth]{./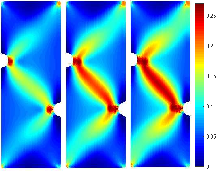}
\caption{}
\label{fig:Asy_gammaP}
\end{subfigure}

\begin{subfigure}{0.45\textwidth}
\centering
\includegraphics[width=0.90\linewidth]{./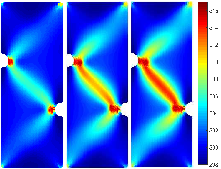}
\caption{}
\label{fig:Asy_Temp}
\end{subfigure}
\hfill
\begin{subfigure}{0.45\textwidth}
\centering
\includegraphics[width=0.90\linewidth]{./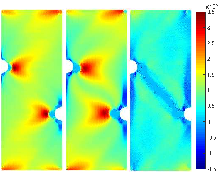}
\caption{}
\label{fig:Asy_Pressure}
\end{subfigure}

\caption{ Contour plots of field variables for asymmetrically notched specimen under tensile loading at times $90, 120$ and 200 $ \mu s $; (a) damage field ($\phi$), (b) equivalent plastic strain, (c) temperature (K), (d) hydrostatic pressure (Pa)} 
\label{fig:Asy_notched}
\end{figure}
%---------------------------------------------------------------%

This subsection demonstrates ductile damage in an asymmetrically notched plain strain specimen subjected to tensile loading. The schematic of the specimen is shown in Fig. \ref{fig:schematic_asymmetrically_notch}. We use the same material parameters of OFHC copper given in Table \ref{table:Material Parameters OFHC_TA15}. The top and bottom surfaces are subjected to a velocity of 16m/s in upward and downward directions, respectively. All other edges are considered traction-free. A total of 35284 particles are used to discretize the specimen in the numerical simulation, placed at a uniform spacing, $\Delta x= \Delta y= 1.667 \times 10^{-4}$m and horizon size of $1.05\Delta x$ is adopted. $\Delta x$ and $\Delta y$ respectively denote the particle spacings along horizontal and vertical directions. $l_\phi$ is taken as $2\Delta x$.  Fig. \ref{fig:Asy_phi} shows the contour plot of the damage field variable ($\phi$). Contour plots of equivalent plastic strain and temperature are presented in Figs. \ref{fig:Asy_gammaP} and   \ref{fig:Asy_Temp} respectively, whereas Fig. \ref{fig:Asy_Pressure} shows the hydrostatic pressure distribution at times 90, 120 and 200 $\mu s$. From the contour plots, we see that as the loading begins (90$\mu s$), damage, equivalent plastic strain, temperature and pressure get localized predominantly at the tip of both the notches. Damage, equivalent plastic strain and temperature further advance in a  direction perpendicular to the loading and finally at an inclination with loading direction. At the complete damaged state (200$\mu s$), significant pressure drop can be seen and eventually, pressure becomes zero at the fracture surface.  A convergence study of the damage field, equivalent plastic strain, and temperature with different number of particles (8821, 24512, 35284) used in the discretization of the specimen are also presented in Figs. \ref{fig:Asy_con_phi}, \ref{fig:Asy_con_gammaP} and \ref{fig:Asy_con_temp} respectively at time $t$ = 200 $\mu s$.

%----------------------phi_Convergence-----------------------------%
\begin{figure}
\begin{subfigure}{0.24\textwidth}
\centering
\includegraphics[width=0.60\linewidth]{./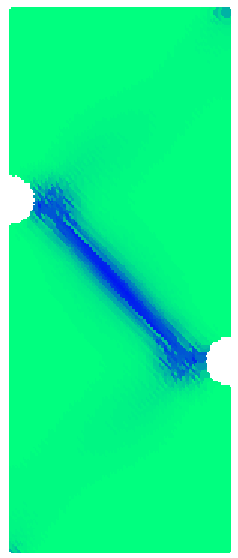}
\caption{}
\label{fig:Asy_con_phi_8821}
\end{subfigure}
\hfill
\begin{subfigure}{0.24\textwidth}
\centering
\includegraphics[width=0.60\linewidth]{./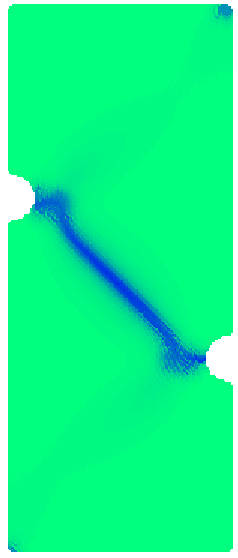}
\caption{}
\label{fig:Asy_con_phi_24512}
\end{subfigure}
\hfill
\begin{subfigure}{0.24\textwidth}
\centering
\includegraphics[width=0.60\linewidth]{./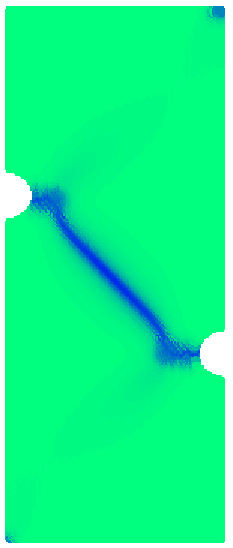}
\caption{}
\label{fig:Asy_con_phi_35284}
\end{subfigure}
\hspace{1mm}
\begin{subfigure}{0.07\textwidth}
\centering
\includegraphics[width=0.90\linewidth]{./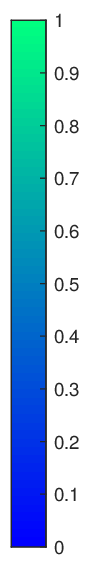}
\end{subfigure}

\caption{ Contour plots of the damage field ($\phi$) for convergence study at time 200 $\mu s $; (a) 8821 particles, (b) 24512 particles, (c) 35284 particles} 
\label{fig:Asy_con_phi}
\end{figure}
%----------------------------------------------------------------------%
%----------------------GammaP_Convergence-----------------------------%
\begin{figure}
\begin{subfigure}{0.24\textwidth}
\centering
\includegraphics[width=0.60\linewidth]{./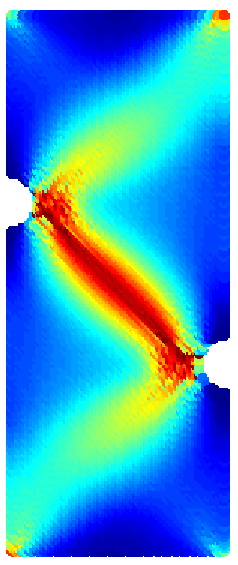}
\caption{}
\label{fig:Asy_con_gammaP_8821}
\end{subfigure}
\hfill
\begin{subfigure}{0.24\textwidth}
\centering
\includegraphics[width=0.60\linewidth]{./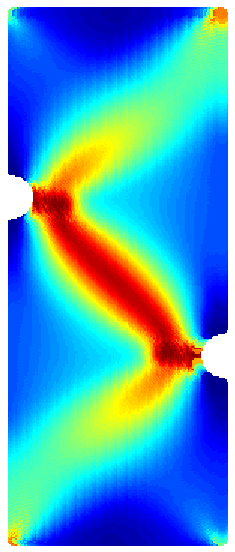}
\caption{}
\label{fig:Asy_con_gammaP_24512}
\end{subfigure}
\hfill
\begin{subfigure}{0.24\textwidth}
\centering
\includegraphics[width=0.60\linewidth]{./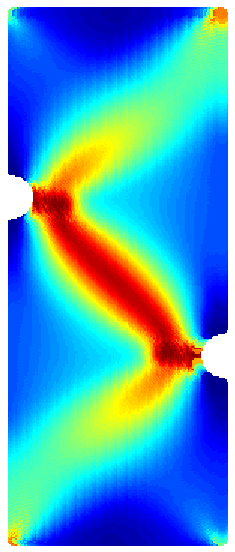}
\caption{}
\label{fig:Asy_con_gammaP_35284}
\end{subfigure}
\hspace{1mm}
\begin{subfigure}{0.07\textwidth}
\centering
\includegraphics[width=0.90\linewidth]{./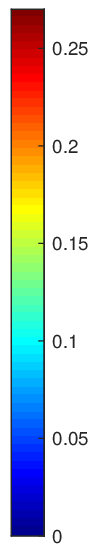}
\end{subfigure}

\caption{ Contour plots of equivalent plastic strain for convergence study at time 200 $\mu s $; (a) 8821 particles, (b) 24512 particles, (c) 35284 particles} 
\label{fig:Asy_con_gammaP}
\end{figure}
%----------------------------------------------------------------------%
%----------------------theta_Convergence-----------------------------%
\begin{figure}
\begin{subfigure}{0.24\textwidth}
\centering
\includegraphics[width=0.60\linewidth]{./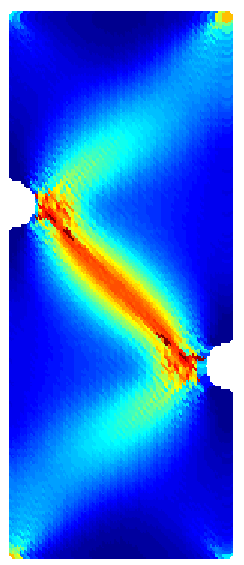}
\caption{}
\label{fig:Asy_con_temp_8821}
\end{subfigure}
\hfill
\begin{subfigure}{0.24\textwidth}
\centering
\includegraphics[width=0.60\linewidth]{./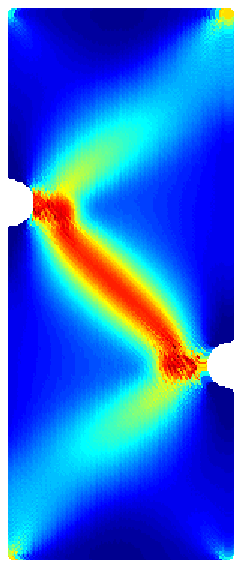}
\caption{}
\label{fig:Asy_con_temp_24512}
\end{subfigure}
\hfill
\begin{subfigure}{0.24\textwidth}
\centering
\includegraphics[width=0.60\linewidth]{./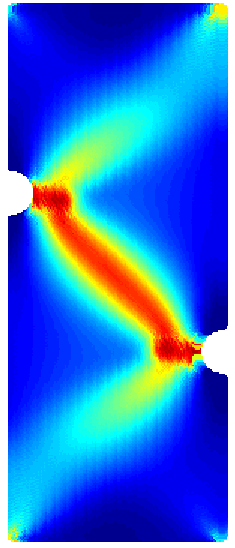}
\caption{}
\label{fig:Asy_con_temp_35284}
\end{subfigure}
\hspace{1mm}
\begin{subfigure}{0.07\textwidth}
\centering
\includegraphics[width=0.90\linewidth]{./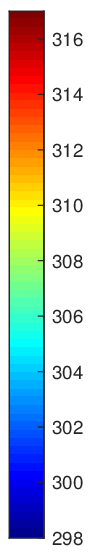}
\end{subfigure}

\caption{ Contour plots of temperature (K) for convergence study at time 200 $\mu s $; (a) 8821 particles, (b) 24512 particles, (c) 35284 particles} 
\label{fig:Asy_con_temp}
\end{figure}

%=========================Cup_cone_fracture============================%
\subsection{Cup-cone fracture in a dog-bone shaped round bar}

%----------------------------------------------------------------%
 \begin{figure}[h]
\centering
\includegraphics[scale=.35]{./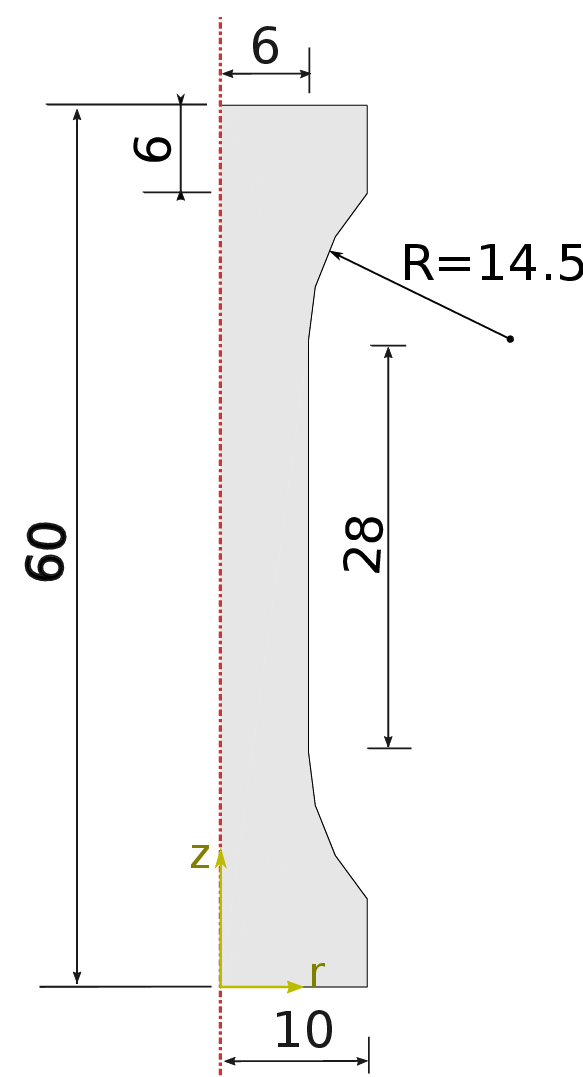}
\caption{Schematic of axi-symmetric section of a dog-bone shaped round bar. Dimensions are in mm.}
\label{fig:schematic_dog-bone}
\end{figure}
%-----------------------------------------------------------------------%

Now we present the cup-cone fracture in a dog-bone shaped round bar subjected to tensile loading.  We use the PD axisymmetric formulation (see Appendix \ref{axi_formulation}) for our numerical implementation. The schematic of the axisymmetric section is shown in Fig. \ref{fig:schematic_dog-bone}. The z-axis denotes the axis of symmetry. The same material parameters of OFHC copper given in Table \ref{table:Material Parameters OFHC_TA15} is used for numerical simulation. The specimen is loaded with an upward velocity of magnitude 16 m/s applied at the top edge whilst keeping the bottom edge restrained against displacement in the z-direction. All other edges are traction-free. Particles are placed at a uniform spacing along the z-direction, $\Delta z = 1 \times 10^{-4} $ and in the r-direction, $\Delta r = 1 \times 10^{-4} $. The horizon size is $1.05\Delta z$. $l_\phi$ is taken as twice of $\Delta r$.  Contour plots of cup-cone fracture at 330, 360 and 410 $(\mu s)$ are shown in Fig. \ref{fig:cup_cone}. Fig. \ref{fig:cup_phi} shows the contour plot of the damage field whereas contour plots of the equivalent plastic strain and temperature are shown in Figs. \ref{cup:cup_gammaP} and \ref{fig:cup_Temp}. Fig. \ref{fig:cup_cone} demonstrates qualitatively that the present model is able to predict the cup-cone fracture mechanism which consists of normal fracture in the  center and combined normal/shear at the specimen rim  \citep{tvergaard1984analysis,scheider2003simulation}. The normal crack diverts its path from radial to slant, implying a transition from tensile fracture to shear fracture.

%-----------------------------------------------------------------------%
\begin{figure}
\begin{subfigure}{0.3\textwidth}
\centering
\includegraphics[width=0.80\linewidth]{./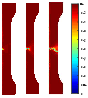}
\caption{}
\label{fig:cup_phi}
\end{subfigure}
\hfill
\begin{subfigure}{0.3\textwidth}
\centering
\includegraphics[width=0.80\linewidth]{./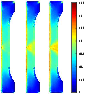}
\caption{}
\label{cup:cup_gammaP}
\end{subfigure}
\hfill
\begin{subfigure}{0.3\textwidth}
\centering
\includegraphics[width=0.80\linewidth]{./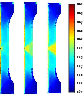}
\caption{}
\label{fig:cup_Temp}
\end{subfigure}

\caption{ Contour plots of field variables for cup-cone fracture at times $330, 360$ and 410 $\mu s $; (a) damage field ($\phi$) (b) equivalent plastic strain (c) temperature (K).} 
\label{fig:cup_cone}
\end{figure}

%==========================Conclusion====================================%

\section{Conclusions}
\label{vd:Conclusion}

This research article has dwelt on a rationally grounded ductile damage model for metals and alloys. The basis for this development is an exploitation of local translational and conformal symmetries implemented through a  space-time gauge theory. The evolutions of plastic flow and damage have been described using two gauge compensating field variables that emerge in the minimal replacement construct. The invariance of the energy density in local space-time translation and scaling is preserved through a space-time gauge covariant definition of partial derivatives. The fracture energy is constructed utilizing the gauge-invariant scalar curvature. This approach has also explicated on the geometric underpinnings of damage through local scaling of the metric, bypassing an internal variables paradigm used in classical modelling routes to ductile damage. The theory furnishes a thermodynamically consistent temperature evolution equation, and thus we eschew the use of the Taylor–Quinney coefficient. Apart from its scientific credibility, the model also allows a ready numerical implementation, with its predicting features showcased herein through several simulation results across different deformation scenarios. In the process, we have been able to reproduce certain experimental observations, e.g. the strain rate locking effect, ductile damage response of titanium alloy (TA15), etc. As an application, a few benchmark problems of ductile damage -- ductile fracture in an asymmetrically notched specimen and cup-cone fracture in the dog-bone shaped round bar, have been explored numerically. Besides extending the present theory to a unified electro-magneto-mechanical damage model, we also intend to explore the local rotational symmetry breaking for ductile fracture problems at extremely high strain rates, i.e. in the shock wave regime where the effect of lattice orientation could be significant.

%=============================================================================================

\section*{Acknowledgments}
The partial financial support received from Indian Space Research Organization (ISRO), Government of India is gratefully acknowledged. The author would like to thank Prof. Roy, Indian Institute of Science Bangalore for his insightful suggestions.
%===========================================================================================

%\clearpage
\begin{appendices}
\section{Axisymmetric formulation for non-ordinary state-based peridynamics}
\label{axi_formulation}

Briefly, we present the axisymmetric formulation for non-ordinary state-based peridynamics. We exploit the symmetry of the structure and the boundary conditions to reduce the problem spatial dimension (e.g., PD power balance, etc.) from 3D to 2D by performing integration along the azimuth direction. Using the cylindrical coordinate system the two-point force state $\barbelow{\tb{T}}$ can be expressed in the component form along the radial (r), azimuth ($\varphi$) and height (z) directions as follows:

\begin{equation}
\barbelow{\tb{T}} = \barbelow{\text{T}}_r\left(r,\varphi,z,r',\varphi',z'\right) \bs{e}_r + \barbelow{\text{T}}_\varphi\left(r,\varphi,z,r',\varphi',z'\right)\bs{e}_\varphi + \barbelow{\text{T}}_z\left(r,\varphi,z,r',\varphi',z'\right)\bs{e}_z
\end{equation}

Due to axisymmetric loading and the boundary conditions the components of the displacement and velocity vectors along the azimuth dirction will vanish in contrast to the two-point functions and the partial derivatives of any field variables with respect to $\varphi$ will also be zero. Similarly, the scalar fields $\phi, \gamma^p$ and $\theta$ are functions of r and z only. In r-z plane, the shape tesor is given as:

\begin{equation}
\overline{\tb{K}}_{2D}=\int_{\mathcal{H}\left(\bs{X}\right)} \omega\left(|\bs{\xi}_{2D}|\right) \left(\bs{\xi}_{2D} \otimes \bs{\xi}_{2D} \right) dr' dz'
\label{eq:shape_tensor_axi}
\end{equation}

The non-local deformation gradient may be given as:

{\renewcommand{\arraystretch}{2.0}
\begin{equation}
\overline{\tb{F}}\left(\barbelow{\tb{Y}}\right) = \begin{bmatrix}
\overline{\tb{F}}_{2D} &  \bs{0}^{\mathsf{T}} \\
\bs{0}  &   \frac{\text{y}_r}{X_r}
\end{bmatrix}   %_{\left(3 \times 3 \right)}
\end{equation}}

The relative deformation vector is given as, $\barbelow{\tb{Y}}_{2D}=\left[\text{y}'_r-\text{y}_r \;\;  \text{y}'_z-\text{y}_z \right]^{\mathsf{T}}$ and $\bs{\xi}_{2D}=\left[X'_r- X_r \;\; X'_z-X_z \right]^{\mathsf{T}}$ denotes the bond vector in the r-z plane. Note that the $\varphi \varphi$-component of the deformation gradient is non-zero and given as, $\overline{\text{F}}_{\varphi \varphi} = \frac{\text{y}_r}{X_r}$ and  $\overline{\tb{F}}_{2D}$ is given as:

\begin{equation}
\overline{\tb{F}}_{2D}\left(\barbelow{\tb{Y}}_{2D}\right) = \left[\int_{\mathcal{H}\left(\bs{X}\right)}\!\!\! \omega\left(|\bs{\xi}_{2D}|\right) \left(\barbelow{\tb{Y}}_{2D}\left\langle\bs{\xi}_{2D}\right\rangle \otimes \bs{\xi}_{2D} \right) dr' dz' \right]\overline{\tb{K}}^{-1}_{2D}
\label{eq:nonlocal_deformation_gradient2d}
\end{equation}

In component form,
{\renewcommand{\arraystretch}{2.0}
\begin{equation}
\overline{\tb{F}}_{2D} = \begin{bmatrix}
\frac{\partial \text{y}_r}{\partial X_r} &  \frac{\partial \text{y}_r}{\partial X_z}  \\
\frac{\partial \text{y}_z}{\partial X_r}   &   \frac{\partial \text{y}_z}{\partial X_z} 
\end{bmatrix} 
\end{equation}}

Taking the dot product on the both sides of Eq. \ref{eq:linear_momentum_PD} with $\dot{\tb{y}}$ and intergrating it over a finite sub-region $\mathcal{P}_t \subset \mathcal{B}_r$, we get:

\begin{equation}
\int_{\mathcal{P}_t} \rho\left(\bs{X}\right)\ddot{\tb{y}}\cdot \dot{\tb{y}} dV_{\bs{X}} = \int_{\mathcal{P}_t} \left[ \int_{\mathcal{B}_r} \left(\barbelow{\tb{T}}\left[\bs{X},t\right]\left\langle\bs{\xi}\right\rangle-\barbelow{\tb{T}}\left[\bs{X}',t\right]\left\langle-\bs{\xi}\right\rangle \right) dV_{\bs{X}'} + \tb{b}_0\left(\bs{X},t\right) \right] \cdot \dot{\tb{y}} dV_{\bs{X}}
\label{eq:linear_momentum_PD_axi1}
\end{equation}

Note that the horizon domain is extended to $\mathcal{B}_r$. Using the identity  and the antisymmetry of the integrand given in Eqs. \ref{eq:force_state_identity} and \ref{eq:identity_1} respectively, we may recast Eq. \ref{eq:linear_momentum_PD_axi1} as:

\begin{align}
\nonumber
2\pi \int_{z_1}^{z_2}\!\!\! \int_{r_1}^{r_2}\!\! \rho \left(\ddot{\text{y}}_r \dot{\text{y}}_r + \ddot{\text{y}}_z \dot{\text{y}}_z\right) rdr dz = \int_{z_1}^{z_2}\!\!\! \int_{r_1}^{r_2}\!\!\!\int_{z_1}^{z_2}\!\!\! \int_{r_1}^{r_2} &\left[ \left(\barbelow{\tilde{\tb{T}}}_2-\barbelow{\tilde{\tb{T}}}'_1\right) \dot{\text{y}}_r + \left(\barbelow{\hat{\tb{T}}}_2-\barbelow{\hat{\tb{T}}}'_1\right)\dot{\text{y}}_z \right] dr' dz' rdr dz \\
&+ 2\pi \int_{z_1}^{z_2}\!\!\! \int_{r_1}^{r_2}\left(\tb{b}_{0r} \dot{\text{y}}_r + \tb{b}_{0z} \dot{\text{y}}_z \right) rdr dz
\label{eq:linear_momentum_PD_axi2}
\end{align} 

where,

\begin{equation}\groupequation{
\begin{split}
&\barbelow{\tilde{\tb{T}}}_1 = \int_{0}^{2\pi}\!\!\! \int_{0}^{2\pi}\barbelow{\tb{T}}\cdot \bs{e}'_r r'd\varphi' d\varphi\\
&\barbelow{\tilde{\tb{T}}}_2 = \int_{0}^{2\pi}\!\!\! \int_{0}^{2\pi}\barbelow{\tb{T}}\cdot \bs{e}_r r'd\varphi' d\varphi\\
&\barbelow{\hat{\tb{T}}}_1 = \int_{0}^{2\pi}\!\!\! \int_{0}^{2\pi}\barbelow{\tb{T}}\cdot \bs{e}'_z r'd\varphi' d\varphi\\
&\barbelow{\hat{\tb{T}}}_2 = \int_{0}^{2\pi}\!\!\! \int_{0}^{2\pi}\barbelow{\tb{T}}\cdot \bs{e}_z r'd\varphi' d\varphi
\end{split}
}\end{equation}

Allowing the interactions only within a finite neighborhood $\mathcal{H} \subset \lceil r_1,r_2 \rceil \times \lceil z_1,z_2 \rceil$ around $\bs{X}$ and  using the arbitrariness of $\dot{\text{y}}_r$ and $\dot{\text{y}}_z$, we may write the equations of motion as:

\begin{equation}
\rho \ddot{\text{y}}_r = \frac{1}{2\pi}\int_{\mathcal{H}\left(\bs{X}\right)}  \left(\barbelow{\tilde{\tb{T}}}_2-\barbelow{\tilde{\tb{T}}}'_1\right) dr' dz' +  \tb{b}_{0r} 
\end{equation}

\begin{equation}
\rho \ddot{\text{y}}_z = \frac{1}{2\pi}\int_{\mathcal{H}\left(\bs{X}\right)}  \left(\barbelow{\hat{\tb{T}}}_2-\barbelow{\hat{\tb{T}}}'_1\right) dr' dz' +  \tb{b}_{0z} 
\end{equation}

Silimarly, we may derive the governing equations for plastic deformation Eq.\ref{eq:micro_force_balance_plasticity_PD} and damage field Eq.\ref{eq:micro_force_balance_damage_PD} in r-z plane as:

\begin{equation}
\zeta_\phi \ddot{\phi}\left(\bs{X},t\right)=\frac{1}{2\pi} \int_{\mathcal{H}\left(\bs{X}\right)}
\left(\barbelow{\tilde{\xi}}_\phi-\barbelow{\tilde{\xi}}'_\phi \right) dr' dz' - {\pi}_\phi\left(\bs{X},t\right)
\label{eq:micro_force_balance_damage_PD_axi}
\end{equation}

\begin{equation}
\zeta_\gamma \ddot{\gamma}^p\left(\bs{X},t\right)=\frac{1}{2\pi} \int_{\mathcal{H}\left(\bs{X}\right)}
\left(\barbelow{\tilde{\xi}}_\gamma-\barbelow{\tilde{\xi}}'_\gamma\right) dr' dz' - \bar{\pi}_\gamma\left(\bs{X},t\right)
\label{eq:micro_force_balance_plasticity_PD_axi}
\end{equation}

Next we present the constitutive correspondence with the classical material model in axisymmetric PD setup. Integrating along the azimuth direction $(\varphi)$, we may write Eq. \ref{eq:internal_energy_rate_local} as:

\begin{align}
\nonumber
\int_{z_1}^{z_2}\!\!\! \int_{r_1}^{r_2}\!\!\! \rho \dot{e}\, rdr dz = \frac{1}{2\pi}\int_{z_1}^{z_2}\!\!\! \int_{r_1}^{r_2}\!\!\! \int_{z_1}^{z_2}\!\!\! \int_{r_1}^{r_2} \bigg[\barbelow{\tilde{\tb{T}}}_1 \dot{\text{y}}'_r + \barbelow{\hat{\tb{T}}}_1 \dot{\text{y}}'_z -\barbelow{\tilde{\tb{T}}}_2 \dot{\text{y}}_r -\barbelow{\hat{\tb{T}}}_2 \dot{\text{y}}_z     
+  \barbelow{\tilde{\xi}}_\gamma \left( \dot{\gamma^p}' - \dot{\gamma}^p \right) \\ \nonumber
+ \barbelow{\tilde{\xi}}_\phi \left( \dot{\phi}' - \dot{\phi} \right)\bigg]dr'dz' rdr dz   
+\int_{z_1}^{z_2}\!\!\! \int_{r_1}^{r_2}\! \left[\left( \zeta_\gamma  \, \ddot{\gamma}^p + \bar{\pi}_\gamma \right) \dot{\gamma}^p   +  \left( \zeta_\phi \, \ddot{\phi} + {\pi}_\phi \right)\dot{\phi}  + \rho h \right] rdr dz  \\ 
- \frac{1}{2\pi}\int_{z_1}^{z_2}\!\!\! \int_{r_1}^{r_2}\!\!\! \int_{z_1}^{z_2}\!\!\! \int_{r_1}^{r_2}\!\! \barbelow{\tilde{q}}\, dr'dz'rdrdz
\label{eq:internal_energy_rate_local_axi}
\end{align}

where, we have used the following definitions:
\begin{equation}
\barbelow{\tilde{q}} = \int_{0}^{2\pi}\!\!\! \int_{0}^{2\pi}\left(\barbelow{q}- \barbelow{q}' \right)  r'd\varphi' d\varphi
\end{equation}

\begin{equation}
\barbelow{\tilde{\xi}}_\gamma = \int_{0}^{2\pi}\!\!\! \int_{0}^{2\pi} \barbelow{\xi}_\gamma  r'd\varphi' d\varphi
\end{equation}

\begin{equation}
\barbelow{\tilde{\xi}}_\phi = \int_{0}^{2\pi}\!\!\! \int_{0}^{2\pi}\barbelow{\xi}_\phi  r'd\varphi' d\varphi
\end{equation}

Now using the non-local approximation of the classical gradients in r-z plane and performing integration along azimuth direction, we may recast Eq. \ref{eq:nonlocal_internal_power} as:

\begin{align}
\nonumber
2\pi\int_{z_1}^{z_2}\!\!\! \int_{r_1}^{r_2}\!\!\! \rho \dot{e}\, rdr dz = \int_{\mathcal{B}_r} \left[ \overline{\tb{T}}: \dot{\overline{\tb{F}}}  + \overline{\bs{\xi}}_\gamma \cdot \dot{\overline{\tb{G}}}_{{\gamma}} + \overline{\bs{\xi}}_\phi \cdot \dot{\overline{\tb{G}}}_\phi \right] dV_{\bs{X}} + 2\pi\int_{z_1}^{z_2}\!\!\! \int_{r_1}^{r_2}\!\bigg[ \left( \zeta_\gamma  \, \ddot{\gamma}^p +\bar{\pi}_\gamma  \right) \dot{\gamma}^p \\ + \left( \zeta_\phi \, \ddot{\phi}  + {\pi}_\phi \right) \dot{\phi}  + \rho_0 h \bigg] rdr dz 
-\int_{z_1}^{z_2}\!\!\! \int_{r_1}^{r_2}\!\!\! \int_{z_1}^{z_2}\!\!\! \int_{r_1}^{r_2}\!\! \barbelow{\tilde{q}}\, dr'dz'rdrdz
\label{eq:nonlocal_internal_power_axi}
\end{align}

where,
\begin{equation}
\int_{\mathcal{B}_r} \overline{\tb{T}}: \dot{\overline{\tb{F}}} dV_{\bs{X}} = 2\pi \int_{z_1}^{z_2}\!\!\! \int_{r_1}^{r_2}\! \left( \int_{\mathcal{H}\left(\bs{X}\right)}\!\!\! \omega \overline{\tb{T}}_{2D}\overline{\tb{K}}^{-1}_{2D}\bs{\xi}_{2D} \cdot \dot{\barbelow{\tb{Y}}}_{2D}   \, dr'dz'  + \overline{\tb{T}}_{\varphi \varphi} \dot{\overline{\text{F}}}_{\varphi \varphi} \right) rdr dz
\label{eq:stress_power_PD_axi1}
\end{equation}

\begin{equation}
\int_{\mathcal{B}_r} \overline{\bs{\xi}}_\gamma \cdot \dot{\overline{\tb{G}}}_{{\gamma}} dV_{\bs{X}} = 2\pi \int_{z_1}^{z_2}\!\!\! \int_{r_1}^{r_2}\!  \int_{\mathcal{H}\left(\bs{X}\right)} \left[\omega \left(\overline{\bs{\xi}}_\gamma\right)_{2D} \cdot \bs{\xi}_{2D} \overline{\tb{K}}_{2D}^{-1} \right]  \dot{\barbelow{\Gamma}}^p_{2D} \,  dr' dz'  rdr dz
\label{eq:plast_power_PD_axi1}
\end{equation}
 
\begin{equation}
\int_{\mathcal{B}_r} \overline{\bs{\xi}}_\phi \cdot \dot{\overline{\tb{G}}}_{\phi} dV_{\bs{X}} = 2\pi \int_{z_1}^{z_2}\!\!\! \int_{r_1}^{r_2}\!  \int_{\mathcal{H}\left(\bs{X}\right)} \left[ \omega \left(\overline{\bs{\xi}}_\phi\right)_{2D} \cdot \bs{\xi}_{2D} \overline{\tb{K}}_{2D}^{-1} \right]  \dot{\barbelow{\Phi}}_{2D} \,  dr' dz'  rdr dz
\label{eq:damage_power_PD_axi1}
\end{equation}

We may extend the domain of integration from $\mathcal{H}$ to $\mathcal{B}_r$ in Eq.\ref{eq:stress_power_PD_axi1}-\ref{eq:damage_power_PD_axi1} and assuming $\overline{\tb{S}}_{2D} = \omega \overline{\tb{T}}_{2D}\overline{\tb{K}}^{-1}_{2D}\bs{\xi}_{2D}$, we may express Eq. \ref{eq:stress_power_PD_axi1} in component form as:

\begin{equation}
\int_{\mathcal{B}_r} \overline{\tb{T}}: \dot{\overline{\tb{F}}} dV_{\bs{X}} = 2\pi \int_{z_1}^{z_2}\!\!\! \int_{r_1}^{r_2}\left[ \int_{z_1}^{z_2} \!\!\! \int_{r_1}^{r_2}\! \left[ \left(\overline{\tb{S}}_{2D} \right)_r \left(\dot{\text{y}}'_r - \dot{\text{y}}_r \right) + \left(\overline{\tb{S}}_{2D} \right)_z \left(\dot{\text{y}}'_z - \dot{\text{y}}_z \right) \right] dr'dz'  + \overline{\tb{T}}_{\varphi \varphi} \frac{\dot{\text{y}}_r}{X_r} \right] rdr dz
\label{eq:stress_power_PD_axi}
\end{equation}

Now comparing Eqs.\ref{eq:internal_energy_rate_local_axi} and \ref{eq:nonlocal_internal_power_axi}, we get:

\begin{equation}\groupequation{
\begin{split}
&\barbelow{\tilde{\tb{T}}}_1 = 2\pi \left(\overline{\tb{S}}_{2D} \right)_r\\
&\barbelow{\tilde{\tb{T}}}_2 = 2\pi\left(\overline{\tb{S}}_{2D} \right)_r -  \frac{2\pi\overline{\tb{T}}_{\varphi \varphi}}{X_r \overline{\alpha}} \\
&\barbelow{\hat{\tb{T}}}_1 = \barbelow{\hat{\tb{T}}}_2 = 2\pi\left(\overline{\tb{S}}_{2D} \right)_z\\
&\barbelow{\tilde{\xi}}_\gamma = \omega \left(\overline{\bs{\xi}}_\gamma\right)_{2D} \cdot \bs{\xi}_{2D} \overline{\tb{K}}_{2D}^{-1} \\
&\barbelow{\tilde{\xi}}_\phi= \omega \left(\overline{\bs{\xi}}_\phi \right)_{2D} \cdot \bs{\xi}_{2D} \overline{\tb{K}}_{2D}^{-1}
\end{split}
}\end{equation}

where, $\overline{\alpha} = \int_{\mathcal{H}}  dr' dz'$ 
%-------------------------------------------------------------
\end{appendices}

%===========================================================================================

%\clearpage
\bibliography{DuctileDamage_SpaceTime}
\end{document}